\theoremstyle{claimstyle}
\newtheorem{case}{Case}
\newtheorem{subcase}{Case}[case]
\let\oldnl\nl% Store \nl in \oldnl
\newcommand{\nonl}{\renewcommand{\nl}{\let\nl\oldnl}}% Remove line number for one line
\newcommand{\exampleG}{\ensuremath{\textsf{G}}\xspace}
\newcommand{\sC}{\ensuremath{\mathsf{C}}\xspace}
\DeclareMathOperator{\parent}{parent}
\author{Jérémie {Chalopin}}{CNRS, Aix Marseille Univ, LIS, Marseille, France}{jeremie.chalopin@lis-lab.fr}{https://orcid.org/0000-0002-2988-8969}{}
\author{Maria {Kokkou}}{Paderborn University, Paderborn, Germany}{maria.kokkou@uni-paderborn.de}{https://orcid.org/0009-0009-8892-3494}{}
\authorrunning{J. Chalopin and M. Kokkou}
\title{Leader Election via Unique Sink Orientation}
\titlerunning{Leader Election via Unique Sink Orientation}
\keywords{Locally Checkable Labeling, LCL, Leader Election, Dismantlable Graphs, Chordal Graphs, Self-Stabilization, Distributed Computing}
\begin{document}

\maketitle

\begin{abstract}  
  A Locally Checkable Labeling (LCL) is a distributed constraint satisfaction problem defined on a bounded-degree graph that relates a finite set of input labels to a finite set of output labels through a finite set of locally checkable constraints. In this work we define labels and local constraints that encode solutions to two classical problems: leader election and spanning tree construction. It is known that leader election cannot be expressed as an LCL in arbitrary graphs using constant-size labels. In fact, it is known that there does not exist a finite set of labels and local constraints for leader election even for the class of rings. On the other hand, there exists a finite set of labels and local constraints characterizing leader election on trees. In this work, we prove that there exists a finite set of labels and local constraints for leader election also in the much larger class of dismantlable graphs. Our labels need one bit per edge or equivalently $O(\Delta)$ bits per node (where $\Delta$ is the maximum degree in the graph) and are checkable within the graph induced by the 1-neighborhood of each node.  To the best of our knowledge, these are the first local labeling results tailored to dismantlable graphs, potentially highlighting structural properties useful for designing labels and constraints for additional LCL problems. Finally, we present a generic transformation that converts any finite set of labels and local constraints into a silent self-stabilizing algorithm by adding only one extra state, assuming a Gouda fair scheduler. This transformation may be of independent interest.
\end{abstract}

\section{Introduction}

A fundamental notion in distributed computing is \emph{locality}, which studies the extent to which global problems admit solutions based solely on local information. In a local algorithm, each node in a network must determine its part of a global solution based solely on the topology and states of its radius $r$ neighborhood, for some $r$ that is smaller than the size of the network. The challenge of such local algorithms is ensuring that these independently computed local solutions remain globally consistent and, as a result, solve a global task without the benefit of global visibility. Locality is concerned with two notions:
\begin{enumerate}
    \item Solving a global problem: If a set of local conditions are satisfied then a global property holds.
    \item Verifying the solution to a global problem: If a global property holds then a set of local conditions are satisfied.
\end{enumerate}

The ability to continue functioning and recovering from faults is essential for distributed systems. Dijkstra introduced the concept of self-stabilization~\cite{dijkstra1974stabilising} as a generic way to model various faults and guarantee the correct behavior of an algorithm regardless of transient faults. An algorithm is self-stabilizing if it eventually converges to a correct output from any arbitrary initial configuration (see also \cite{dolev2000book}). A self-stabilizing algorithm is called \emph{silent} \cite{dolev1999silent} if when it reaches a correct configuration, it remains in that configuration unless a fault occurs. On the other hand, a self-stabilizing algorithm that is not silent can transition between correct configurations as the algorithm perpetually continues computing. The design of a self-stabilizing algorithm typically consists of three components:
\begin{inparaenum}[(a)]
    \item Detect when the global configuration is faulty
    \item Reach a non-faulty configuration
    \item Solve the original task.
\end{inparaenum}
Thus self-stabilization is naturally connected to the concept of locality as observed for the first time in \cite{afek1997local}. In a local algorithm each node maintains: (i) a \emph{label} which is the local part of a global solution and (ii) a \emph{certificate} which is additional information that allows the verification of the global configuration via local checks. For instance, in the self-stabilizing spanning tree algorithm of~\cite{chen1991spanning}, each node stores a \emph{label} which is a pointer to its parent (as a way to solve the task) and a \emph{certificate} which is its distance from the root (as a way to verify the solution). Without the latter, cycles cannot be locally detected. As shown initially by \cite{naor1993lcl}, there exist distributed problems whose solutions can be locally verified without additional certificates. The specification of a set of allowed labelings, defined by a finite set of constraints that can be verified within a constant radius $r$, is called a \emph{Locally Checkable Labeling (LCL)}. An example of a distributed problem with a local solution is \emph{vertex coloring} in which every node knows whether the solution is locally correct by only comparing its label (i.e., its color) to that of its neighbors at distance one. 

Leader Election \cite{LeLann77}, is an important primitive in distributed systems as it allows the coordination of processes in the system. The goal in this problem is for a unique node to enter and remain in the distinct \emph{elected} state indefinitely. A usual assumption within this context, that we do not make in this work, is that each node has a unique identifier. In the anonymous setting, the problem cannot be solved without additional assumptions due to symmetries~\cite{angluin1980local}. One way to overcome this impossibility is to characterize graph classes where the problem can be solved (e.g., as in \cite{yamashita1996computing}) or to employ randomization (e.g., \cite{itai1990symmetry}). In this work, we adopt the former approach and study the problem in specific classes of graphs.

As a global coordination problem, leader election is known not to be locally checkable in arbitrary graphs. It is known that on trees, a unique leader can be verified using constant-size certificates. Furthermore, in rings, verifying the existence of a unique leader requires $\Omega(\log n)$-bit certificates, where $n$ is the size of the ring. The main motivation of this work was determining a large class of graphs where it is possible to give a labeling scheme (i.e., set of labels and constraints) for leader election that does not depend on the size of the graph but instead on some other parameter. We surprisingly prove that there exist  classes of graphs for which it is possible: chordal and dismantlable graphs. For each class of graphs we define a set of local constraints and we show that any labeling satisfying the constraints guarantees a unique leader.
% We give a formal definition of the problem for each class of graphs and each set of constraints in Sections % \ref{sec:trees},
% \ref{sec:chordal} and \ref{sec:k4-free}.

The \emph{labels} we define in this work are an orientation of the edges incident to each node. For each class of graphs we consider, we define a set of local constraints and we show that if all constraints we define for each family of graphs are satisfied at every node, the global configuration contains a unique sink that we define to be a \emph{leader}. Then we use this description of what a solution looks like locally to construct a silent self-stabilizing algorithm for leader election in specific classes of graphs that works under a Gouda fair scheduler \cite{Gouda2001theory} which is the strongest kind of fairness \cite{dubois2011taxonomy}. Our self-stabilizing algorithm is generic and works for any problem for which the solution can be defined as a local labeling. Results in a similar flavor that combine local checks \cite{afek1997local} with a global ``reset'' \cite{awerbuch1994reset} to some predetermined state in order to achieve self stabilization already exist \cite{awerbuch1994stabilization}. However, contrary to algorithms such as \cite{awerbuch1994reset,awerbuch1994stabilization,BlinFP14}, thanks to Gouda fairness, our algorithm only needs one additional bit of memory to transform a labeling scheme using a finite number of labels
% \green{(i.e., a finite set of labels and local constraints)}
to a self-stabilizing algorithm.

The main focus of this paper is on dismantlable graphs. Dismantlable graphs are exactly the class of cop-win graphs in the classic Cops and Robbers game
\cite{aigner1984cops,nowakowski1983vertex,quilliot1978jeux} and
include well-structured families such as bridged~\cite{FaJa,SoCh} and
Helly~\cite{BP-absolute,BaPr} graphs, which are known for their
convexity and metric properties. As a warm-up, we consider two
additional classes: trees (Section \ref{sec:trees}) and chordal graphs
(Section \ref{sec:chordal}). Trees are particular chordal graphs,
which in turn are particular dismantlable graphs. Both dismantlable
and chordal graphs can be characterized by an ordering of their
vertices. Chordal graphs admit an ordering called a simplicial
order, in which each vertex is adjacent to a clique of
nodes that appear later in the ordering. Similarly, dismantlable
graphs admit a dismantling order, where each vertex $v$ has a neighbor
that is adjacent to all other neighbors of $v$ appearing later in the
ordering.

\subsection{Related Work}\label{sec:rel-work}
The concept of locality in distributed computing was formalized by Linial~\cite{linial1987local,linial1992journal} in the LOCAL model as a way to study how local computations can synthesize a solution to a global problem. The focus of this model is on the distance from which each node must gather information in order to compute its part of the global solution. Hence the complexity of an algorithm in LOCAL depends on the number of synchronous communication rounds. What can be computed within this setting immediately received significant attention (e.g., \cite{vishkin1986determinisitic,goldberg1987parallel,luby1986simple,alon1986fast}) and is still an active area of research. % A closely related model, CONGEST~\cite{peleg2000distributed}, is defined in the same way as LOCAL with the restriction that nodes can only exchange bounded information in each round, typically logarithmic to the size of the network.
%\JC{do we really need to talk about the congest model?}\MK{We had added this to address an old STACS comment that is no longer relevant. I'm not too attached to it}\JC{I removed it (it is in the comments)}

In the same spirit, Naor and Stockmeyer defined the class of
\emph{Locally Checkable Labeling} (LCL) problems in their seminal
paper~\cite{naor1993lcl,naor1995journal}. An LCL problem is specified
by a constant radius $r$ and a finite set of allowed labeled radius
$r$ neighborhoods such that, in graphs of bounded degree, a labeling
is globally correct if and only if every node's radius $r$
neighborhood belongs to this set. Intuitively the definitions of LOCAL
and LCL are connected. This connection is made explicit in
Chang's~\cite{chang2024lcl} definition of LCL: ``A distributed problem
on bounded-degree graphs is a locally checkable labeling (LCL) if
there is some constant $r$ such that the correctness of a solution can
be checked locally in $r$ rounds of communication in the LOCAL
model''. A systematic study of LCL problems was started in Brandt et
al.~\cite{brandt2016lower} which then led to a series of papers (e.g.,
\cite{balliu2018classes, balliu2018locally, balliu2020randomness})
that eventually formed a complete picture on the classification of the
round complexity of all LCL problems in four broad classes
\cite{suomela2020landscape}. Recently, Bousquet et
al.~\cite{feuilloley2025certification} proposed a different direction
of classification of LCL problems based on space instead of round
complexity. Several results are given in
\cite{feuilloley2025certification}, characterizing both ``gaps''
(i.e., pairs of functions $f(n), g(n)$, with $f(n) < g(n)$ where no
property has optimal certificate size in $(f(n), g(n))$ for a given
class of graphs) as well as proving the existence of properties with
optimal certificate sizes for various classes of graphs including
paths, cycles and caterpillars. Although both round and space complexity of LCL problems are well understood in trees, paths and general graphs, additional classes of graphs have largely been overlooked in the literature. In this work we consider 
dismantlable graphs %which are not explicitly studied in
%\cite{feuilloley2025certification}. 
which to the best of our knowledge have not been explicitly considered before in the context of locally checkable labelings. For this class of graphs we show that leader election can be expressed as an LCL with labels of size $O(\Delta)$, which is not true for arbitrary graphs. Our approach is complementary to
that of \cite{feuilloley2025certification} in the sense that we only
consider specific fundamental problems but for larger classes of graphs.

The most similar paper to our work is \cite{chalopin2024stabilising},
which also locally checks a set of constraints that ensure the
existence of a unique sink defined to be a leader in a specific class
of graphs. The main similarity between our work and
\cite{chalopin2024stabilising} is the assumption that nodes are
anonymous. In \cite{chalopin2024stabilising}, the authors propose a
constant-memory labeling and a set of sufficient conditions to ensure
a unique sink for a simply connected subset of the regular triangular
grid $S$, where \emph{simply connected} means that the configuration
does not contain ``holes'' (i.e., subsets of the triangular grid that
do not belong in $S$ and are disconnected to each other). As simply
connected parts of triangular grids are dismantlable graphs, this work
(in particular \cref{sec:k4-free}) generalizes the result
of~\cite{chalopin2024stabilising} and extends this idea to additional
classes of graphs. The natural generalization of
\cite{chalopin2024stabilising} would be considering graphs with a
contractible clique-complex, but this class is difficult to handle
algorithmically, as recognizing these graphs is undecidable (this
follows from a result of Novikov, as explained
in~\cite{Tancer2016collapsible}). Dismantlable graphs form a class of
graphs that have contractible clique-complexes, but they are % \bleu{much}
%\MK{I wouldn't emphasise how much easier it is}\JC{OK.}
easier to
handle as they can be recognized in polynomial
time~\cite{nowakowski1983vertex,quilliot1978jeux}, making them a
natural class to consider. Chordal graphs form a subclass of
dismantlable graphs that is already very large as it contains the
class of split graphs that already contains $2^{\Theta(n^2)}$
graphs~\cite{BRW-split1985,oeis-split}. The class of dismantlable
graphs contain also the class of Helly graphs~\cite{BP-absolute,BaPr}
that are universal in the following sense: any graph $G$ is an
isometric subgraph of a Helly graph.  Observe that adding a universal
vertex (i.e., a vertex neighboring to every other vertex in the graph)
to any $n$-node graph results in an $(n+1)$-node dismantlable graph
implying that the number of dismantlable graphs with $n+1$ vertices is
larger than the number of $n$-node graphs. This discussion is
summarized on Figure \ref{fig:discussion}.

\begin{figure}[h]
    \centering
    \includegraphics[scale=.68]{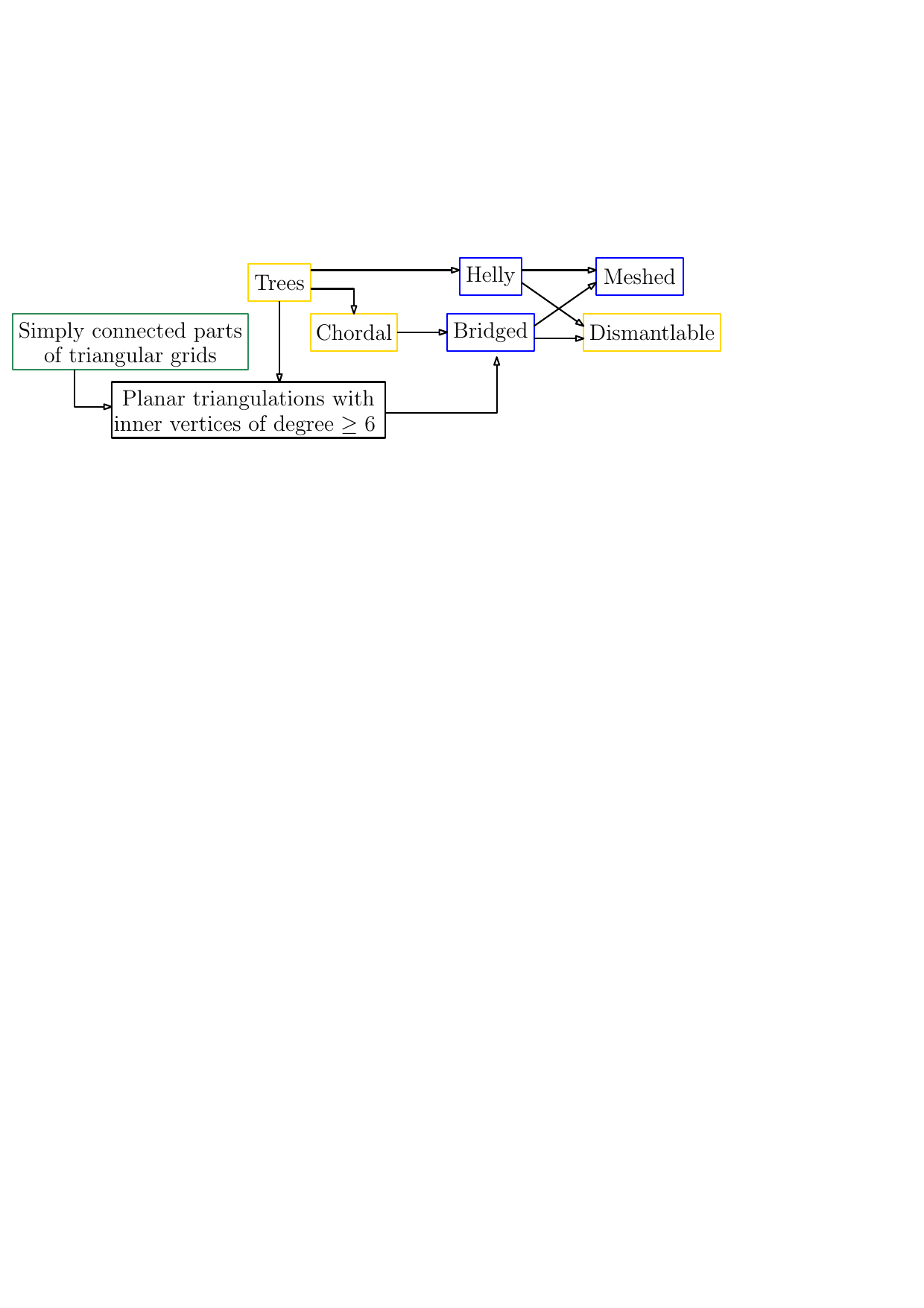}
    \caption{A graph class hierarchy, where each arrow represents an
      inclusion relation. The green class is considered
      in~\cite{chalopin2024stabilising}. The blue classes are the
      classes explicitly considered in~\cite{ChChKo-meshed}. The
      yellow classes are the classes explicitly considered in this
      paper.}
    \label{fig:discussion}
  \end{figure}
  
  Recently, \cite{feuilloley2025proving} studied the verification aspect
  of a unique leader existing in anonymous grids and chordal graphs,
  giving an $O(\log D)$ distance based certification scheme, where $D$
  is the diameter of the network. After the first version of the current
  paper appeared, the idea of \cite{feuilloley2025proving} to verify the
  existence of a unique leader by a certificate encoding the distance of
  each node to the leader was extended to more classes of graphs in
  \cite{ChChKo-meshed}. Notably, \cite{ChChKo-meshed} verifies the
  existence of a unique leader in meshed graphs using certificates of
  constant size, where the certificate of each node is the modulo 3
  distance to the leader. Meshed graphs is a class of graphs that also
  contains chordal, bridged, and Helly graphs. However, the classes of
  meshed and dismantlable graphs are not comparable. In fact, meshed
  graphs are defined by a distance condition whereas dismantlable graphs
  do not admit such a metric characterization. As a result it is not
  clear whether a similar distance based labeling can be defined
  for dismantlable graphs. In this work, the labels we assign at each
  node encode an orientation that implies the existence of a unique
  sink, defined to be a leader. Hence, the size of our labels depends on
  the maximum degree of the graph. In the standard LCL setting where the maximum degree is bounded, this yields constant-size labels. This work as well
  as the work of \cite{feuilloley2025proving} and
  \cite{ChChKo-meshed} assume that
  nodes have access to information that fits within the radius two verification rules. To
  the best of our knowledge no other papers consider leader election in
  LCL or related models in anonymous graphs.

  \subsection{Our Contributions}

    As already defined, an LCL is a specification of a set of allowed radius-$r$, labeled neighborhoods such that a labeling is valid if every node sees an allowed neighborhood, where ``allowed'' refers to constraints proving that the labels solve a global problem. An algorithm is then needed as a procedure that produces those labels.%\JC{Do we really need the previous sentence?}\MK{I wanted a(nother) connection between the two results since several reviewers said they seem completely separate.}\JC{OK.} 
    In this work:

    \textsf{1.} We define a set of labels, which are usually edge
    orientations, and a set of local constraints that must be
    satisfied by the edge orientations at every node, with the
    constraints depending on the graph class. Our constraints refer to
    the graph induced by the one
    neighborhood % radius two neighborhood
    for each node in dismantlable graphs and related classes. We show
    that when the labeling at every node satisfies our constraints,
    the network contains a unique sink that we define to be a
    leader. The size of the label at each node is \emph{deg} bits
    (where \emph{deg} is the degree of the node). As most of our
    labeling schemes are an orientation of all edges incident to each
    node, our approach results in constant size labels for graphs of
    bounded degree or bounded degeneracy
    \cite{feuilloley2021degeneracy,feuilloley2023genus}. Although in
    the distributed verification and LCL literature it is usually
    assumed that nodes have unique identifiers, here we only consider
    anonymous nodes. Instead of the usual approach of using structural
    properties for the construction of the labels, in this work we
    capture the geometry of graphs in the labels. This geometric
    approach can be useful in practical geometric settings, as for
    example our result holds for planar triangulations\footnote{Here,
      a planar triangulation is a planar graph whose all inner faces
      are triangles.} where inner nodes have degree at least six. As
    planar graphs are also 5-degenerate, this is an example of a class
    of graphs with bounded degeneracy where our approach leads to
    constant-size labels.

    As mentioned above, dismantlable graphs (respectively,
      chordal graphs) admit a dismantling order (respectively, a
      simplicial order) of their vertices, and we exploit this order
      to show the existence of an orientation satisfying our
      constraints.  However, the orientations obtained in this way are
      acyclic, while the orientations satisfying our constraints form
      a much larger class of orientations, as they may contain large
      cycles. We thus need to show that \emph{every} labeling
      satisfying our constraints results in an orientation with a
      unique sink (even if they are not acyclic).
    % Despite dismantlable graphs (and chordal graphs in Section
    % \ref{sec:chordal}) admitting an ordering of their vertices, the
    % edge orientation satisfying our constraints is much more general
    % as we need to show that \emph{every} labeling satisfying the
    % constraints results in an orientation with a unique sink. \bleu{In
    %   fact, our local constraints ensure a unique sink, even if the
    %   orientations satisfying these constraints may contain large
    %   cycles.} % In fact, our LCL ensures a
    % % unique sink despite the orientation of edges allowing large
    % % cycles.
    This highlights the fact that with our orientations, one cannot
    use the well-known trick of constructing a rooted spanning tree and
    electing the root. In the particular subcase of chordal graphs, our
    label and constraints additionally ensure an acyclic orientation. For the
    case of dismantlable graphs, we also provide a labeling scheme to
    construct a spanning tree, assuming a distinct node marked as the root
    is given as input. Any labeling, such as the one we give in Section \ref{sec:elec-dism}, satisfying the unique-sink constraints defines a unique leader, which serves as the root.

    \textsf{2.} We show that given a set of labels and local constraints, a silent self-stabilizing
    algorithm can always be constructed with one additional bit of memory
    per node, assuming a Gouda fair scheduler. Although
    \cite{goos2016locally} already mentions that locally checkable proofs,
    which include LCLs, can be interpreted as nondeterministic local
    algorithms, here we formalize this intuition and extend it to defining
    a self-stabilizing algorithm. The one additional bit needed by
    Algorithm~\ref{algo:gcssa} to transform any $k$-bit labeling scheme to
    a self-stabilizing algorithm is less than prior general
    transformations, such as the one in~\cite{BlinFP14}, which requires
    $O(k + \log n)$ bits, where $n$ is the number of computational
    entities in the system, but works even under an unfair scheduler. So,
    Algorithm~\ref{algo:gcssa} can be used in systems with constant memory
    where~\cite{BlinFP14} cannot be employed. In our case, this means that
    Algorithm \ref{algo:gcssa} produces an edge orientation with a unique
    sink, for any initially arbitrarily directed (or undirected)
    dismantlable or chordal graph.

    \section{Model and Preliminaries}
\subsection{Model}\label{sec:model}
Let $G = (V,E,\delta)$ be a graph with vertex set $V$, edge set $E$
and a port labeling function $\delta$. For any vertex $v \in V$, let
$N(v)$ denote its neighborhood and let $N[v] = N(v) \cup \{v\}$ denote
its \emph{closed} neighborhood. The degree of a vertex $v$ is the size
of $N(v)$ and is denoted by $\deg(v)$. The 1-ball at $v$, denoted by
$B_1(v)$, is the graph induced by $N[v]$. When $G$ is oriented, we
write $N^+(v)$ (resp. $N^-(v)$) for the outgoing (resp. incoming)
neighborhood of $v$ in $G$ and refer to nodes in $N^+(v)$ as
\emph{outneighbors} of $v$. We also let $N^+[v] = N^+(v) \cup \{v\}$
and $N^-[v] = N^-(v) \cup \{v\}$. Given a graph $G=(V,E)$ and a
  subset $S$ of $V$, the \emph{induced subgraph} $G[S]$ of $G$ induced
  by $S$ is the graph whose vertex set is $S$ and whose edge set
  consists of all of the edges in $E$ that have both endpoints in
  $S$.  We define the port labeling function to be
$\delta = \{\delta_v\}_{v \in V}$, where
$\delta_v : N(v) \rightarrow \{1,2,\ldots,\deg(v)\}$ assigns a unique
port number to each edge incident to $v$ for every $v \in V$ (i.e.,
$\delta_v$ is a bijection).
% We write $K_4$ to denote a complete graph with four vertices. A
% graph is \emph{$K_4$-free} if it contains no subgraph that is
% isomorphic to a $K_4$.
In a directed graph, we say that a triangle $uvw$ is a \emph{directed
  triangle} if the edges of this triangle are directed as
$\overrightarrow{uv}, \overrightarrow{vw},\overrightarrow{wu}$.

% to refer to a directed cycle of size
% three. %We consider two graph classes: chordal graphs and $K_4$-free dismantlable graphs, defined in their respective sections.

Let $G = (V,E,\delta)$ be a graph and $S$ be an arbitrary set of states. For any $G$ and any mapping $x_S : V \rightarrow S$, we say that $(G, x_S)$ is a configuration of $G$. We say that $x_S(v)$ is the state from $S$ assigned to $v \in V$ by $x_S$. We define a \emph{task} $T = (O,P)$ as a set of states $O$ and a \emph{problem specification} $P$, where $P$ is a set of configurations over $O$ that we call the \emph{correct} configurations for $T$. It is possible to have several correct configurations in $P$ over the same graph. The notion of a \emph{correct configuration} is specific to the task. In this paper, we consider two tasks: leader election (\cref{sec:warm-up,sec:k4-free}) and spanning tree construction (\cref{sec:k4-free}). We give the definition of correct configurations in the problem specification $P$ and the set of states $O$ for each of these tasks here. We write $x_O$ to denote the mapping $x_O : V \rightarrow O$.

\begin{description}
    \item[Leader Election] Let $O = \{\textsf{L}, \textsf{NL}\}$ where \textsf{L} denotes the \emph{leader} state and \textsf{NL} denotes the \emph{not leader} state. The problem specification $P$ for leader election is then expressed as any configuration $(G,x_O)$ for which: $
    \exists! \text{ } v \in V \text{ such that } x_O(v) = \textsf{L} \text{ and } \forall v' \in V \text{ such that } v' \neq v, \text{ } x_O(v') = \textsf{NL}$.

  \item[Spanning Tree] Let $O = \mathbb{N}$. Let $\parent_G(v)$
    be the unique vertex $w \in N(v)$ such that $\delta_v(w) = x_O(v)$
    if $1 \leq x_O(v) \leq \deg(v)$ and let
    $\parent_G(v) = \bot$ otherwise (e.g., if $x_O(v) = 0$). The
    set of correct configurations $P$ is the set of configurations
    $(G,x_O)$ such that:
    \begin{itemize}
    \item $\forall v \in V$, $0 \leq x_O(v) \leq \deg(v)$,
    \item $\exists! \text{ } r \in V$ such that $\parent_G(r) = \bot$
      (i.e., such that $x_O(v) = 0$),
    \item The set of directed edges
      $\{\overrightarrow{vw} \mid v \neq r \text{ and } w = \parent_G(v)\}$ is a
      directed spanning tree.
    \end{itemize} 
\end{description}

We assume that the nodes of $G$ do not have identifiers, that is, $G$ is \emph{anonymous}. Let $I$ be a possibly empty set of inputs and call the mapping from $V$ to $I$ the \emph{input function}. Let $L$ be a non-empty set of states, called the \emph{labels}. A \emph{labeling} on $G$ is a mapping $x_L : V \rightarrow L$. For a given labeling $x_L$ and a node $v \in V$, the \emph{view} of $v$ consists of the 1-ball $B_1(v)$ and the restriction of the port numbering function $\delta$, the input function (if defined) and the labeling $x_L$ to $B_1(v)$.\footnote{The graph induced by the one neighborhood of a node fits within the
usual radius 2 verification rules.} An example of the view of a node is given in \Cref{fig:view}. 

\begin{figure}[ht]
    \centering
    \includegraphics[scale=1,page=1]{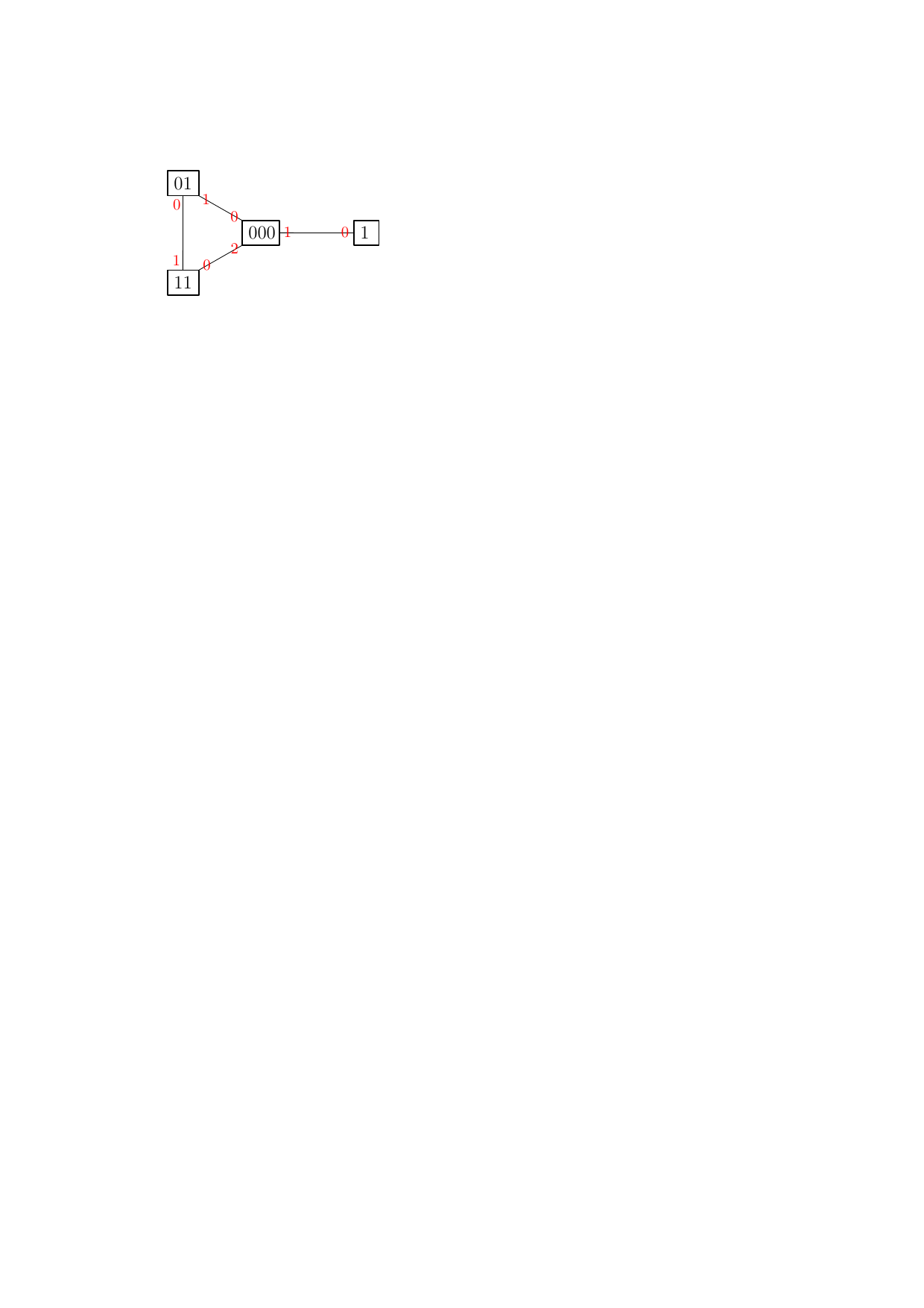}\hspace{1.5cm}
    \includegraphics[scale=1,page=2]{model.pdf}
    \caption{Right: A graph with port labels (red) where each node is assigned a label in $\{0,1\}^*$ (black). Left: The view of the node marked in gray.}
    \label{fig:view}
\end{figure}

Let $R$ be a set of configurations of the form $(B,x_L)$ where $B$ is a ball of radius 1 and $x_L$ is a labeling defined on the nodes of $B$. We refer to $R$ as a set of \emph{local} configurations. A  configuration is said to be \emph{correct} with respect to $(L, R)$ if
for every $v \in V$ the pair $(B_1(v), x_L |_{N[v]})$ belongs to $R$, where $x_L |_{N[v]}$ denotes the restriction of $x_L$ to $N[v]$. In other words, the  configuration is correct if the labeling in each ball of radius 1 satisfies the local constraints described by the local configurations in $R$. We refer to the pair $(L,R)$ as a \emph{labeling scheme}.

A labeling scheme $(L,R)$ is of \emph{finite type} if for any
$1$-ball $B$, there is a finite number of labelings of $B$ in $R$,
i.e., the set $|\{x \in C^{V(B)} \mid (B,x) \in R\}|$ is finite. For
example, if $L$ is finite then any labeling scheme $(L,R)$ is
of finite type.  Observe that a labeling scheme is of finite type if the
number of states a node $v$ can take in a correct labeling of its
$1$-ball is bounded by a function of its degree $\deg(v)$. Note that in the case of bounded degree graphs, a labeling scheme of finite type corresponds precisely to a classical LCL. In the remainder of this paper we only consider labelings of finite type\footnote{A similar concept of Locally Finite Labelings (LFLs) was recently introduced in \cite{schmid2026lfl}.}.

A labeling scheme is used to characterize the solutions to a given task $\mathcal{T} = (O,P)$. The key property we prove, is that if a configuration is (locally) correct with respect to the labeling scheme, then it is (globally) correct with respect to the task specification (i.e., it is a correct configuration for the task). Formally, there exists $out : L \rightarrow O$, such that there exists a labeling scheme $(L,R)$ for some task $(O,P)$ in a family of graphs (that might initially have inputs) $\mathcal{F}$ if for every $G \in \mathcal{F}$ the following two properties are satisfied\footnote{The symbol $\circ$ denotes the composition operator.}:
% \begin{enumerate}
%     \item (Existence) There exists a correct certificate assignment (i.e., $\exists x_C \in P$).
%     \item (Soundness) For every $v \in V(G)$, if $x_C(v) \in R$ then $out(x_C(v)) \in P$.
% \end{enumerate}

\begin{description}
\item[Existence]\hspace{.04cm} $\forall G \in \mathcal{F}, \text{ } \exists (G,x_L) \text{ s.t. } \forall v \in V, \text{ } (B_1(v), x_L |_{N[v]}) \in R$
\item[Soundness] $\forall (G,x_L) \text{ s.t. } G \in \mathcal{F}, \text{ if } \forall v \text{ } (B_1(v), x_L |_{N[v]}) \in R \text{ then } (G, \emph{out} \circ x_L) \in P$ 
\end{description}

% We say that a certification scheme $(C,R)$ is \emph{finite} for a task
% $\cT = (O,P)$ in a family $\cF$ if for any graph $G$ in $\cF$, for any
% vertex $v$ of $G$, there exists a finite set of states
% $C' \subseteq C$ such that $v$

In most of our labeling schemes, the labels correspond to an edge orientation. For these labels, we denote the configuration as $\vec{G}$ and we adopt the following encoding for the labels: the label assigned to each node $v \in V$ is composed of $\deg(v)$ bits where each bit encodes the direction of the corresponding incident edge. We assume that 0 denotes an incoming edge to $v$ and 1 denotes an outgoing edge from $v$. The bit at position $i \in \{0,\ldots,\deg(v)-1\}$ of the label corresponds to the direction of the edge at port $i$ of $v$. A visualization of this for the graph in \cref{fig:view} is given in \cref{fig:directed-edges}.
\begin{figure}[ht]
    \centering
    \includegraphics[page=3,scale=1]{model.pdf}\hspace{1cm}
    \includegraphics[page=4,scale=1]{model.pdf}
    \caption{The $i$-th bit of the label (black numbers) of each node $v$ corresponds to the direction of the edge incident to port $i$ (red numbers) of $v$, for $i \in \{0,\ldots,deg(v)-1\}$ where 0 denotes an incoming and 1 an outgoing edge. On the left we show the direction of the edges and on the right we additionally show the bit of the label of each node corresponding to each edge.} 
    \label{fig:directed-edges}
\end{figure}
For each edge $uv$ where $u$ (resp. $v$) is connected to $v$ (resp. $u$) through port $i$ (resp. $j$), we say that the edge is oriented if the $i$-th bit of $u$ differs from the $j$-th bit of $v$.

\section{Warm Up: Labelings for Trees and Chordal Graphs}\label{sec:warm-up}

\subsection{Unique Sink in Trees}\label{sec:trees}
We give a simple labeling that guarantees a unique sink in the class
of trees. While not new (e.g., one could easily construct a
certificate following the self--stabilizing leader election algorithm
of \cite{antonoiu1996trees}), we believe it gives a good intuition
behind the labelings and proofs in the following sections. The label
of each node consists of the orientation of edges incident to the
node. A  configuration is correct in a tree $T$ if the following constraints are satisfied at every node. % 
% T\ref{rule:tree-all-edges-directed} and
% T\ref{rule:tree-one-neighbor}. 

\begin{enumerate}[T1]
    \item\label{rule:tree-all-edges-directed} Each edge is directed.
    \item\label{rule:tree-one-neighbor} For each vertex $v$, $v$ has at  most one outgoing edge.
\end{enumerate}

An orientation $\vec{T}$ satisfying $T_1$ and $T_2$  exists (i.e.,  the \textsf{Existence} property holds): first direct each edge incident to a leaf toward its unique neighbor, then iteratively orient edges away from leaves in the subgraph of remaining unoriented edges.
We now show that every correct configuration $\vec{T}$ has a unique sink which we
define to be the leader.

\begin{theorem}[Soundness]\label{th:tree-certificate-correct}  
    In a tree $T$, % where the nodes of $T$ do not have any input,
    any configuration $\vec{T}$ satisfying T\ref{rule:tree-all-edges-directed} and T\ref{rule:tree-one-neighbor} contains a unique sink.
\end{theorem}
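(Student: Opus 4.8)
The plan is to argue by a standard counting/contradiction argument using the fact that a tree on $n$ vertices has exactly $n-1$ edges. First I would observe that by condition T\ref{rule:tree-all-edges-directed} every edge is oriented, so the orientation $\vec{T}$ assigns a direction to each of the $n-1$ edges. A \emph{sink} is a vertex with no outgoing edge; I want to show exactly one such vertex exists. For existence of at least one sink: since $T$ is acyclic (being a tree), any orientation of $T$ is acyclic, and a finite acyclic digraph always has at least one sink — just follow outgoing edges from any vertex; the path cannot repeat a vertex and cannot go on forever, so it terminates at a sink. Alternatively one can get existence from the counting argument below.

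For uniqueness, the key step is the counting argument. Let $s$ be the number of sinks. By T\ref{rule:tree-one-neighbor}, every vertex has at most one outgoing edge, so a non-sink vertex has exactly one outgoing edge and a sink has zero. Summing the out-degrees over all vertices counts each (oriented) edge exactly once, so $\sum_{v} \deg^+(v) = |E| = n-1$. On the other hand $\sum_v \deg^+(v) = n - s$, since exactly $s$ vertices contribute $0$ and the remaining $n-s$ contribute $1$ each. Hence $n - s = n - 1$, giving $s = 1$. This simultaneously yields existence and uniqueness of the sink, and we define the leader to be that unique sink, i.e.\ $\emph{out} \circ x_C$ maps it to $\textsf{L}$ and all other vertices to $\textsf{NL}$, so the resulting configuration lies in the problem specification $P$ for leader election.

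I don't expect a serious obstacle here; the only points requiring a little care are (i) making precise that ``each edge is directed'' (T\ref{rule:tree-all-edges-directed}) combined with the bit-encoding of certificates really does give a well-defined orientation in which every edge has exactly one of the two endpoints as tail — this follows from the encoding convention described before the theorem, where an edge $uv$ is oriented precisely when the relevant bits of $u$ and $v$ disagree, and T\ref{rule:tree-all-edges-directed} forces disagreement on every edge; and (ii) noting that T\ref{rule:tree-one-neighbor} is a condition checkable purely from the view $B_1(v)$ of each node, so it is a legitimate local condition. Everything else is the one-line handshake/counting identity $\sum_v \deg^+(v) = |E|$ specialized to $|E| = n-1$.
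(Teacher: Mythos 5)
Your proof is correct, but it takes a genuinely different route from the paper. The paper argues by induction on the number of vertices: remove a leaf $v$, note that T\ref{rule:tree-all-edges-directed} and T\ref{rule:tree-one-neighbor} still hold on $\vec{T}\setminus\{v\}$, and then do a short case analysis on the orientation of the edge between $v$ and its neighbor to see that the unique sink of the smaller tree either survives or is replaced by $v$. You instead use a handshake count: by T\ref{rule:tree-one-neighbor} every vertex has out-degree $0$ or $1$, so $\sum_v \deg^+(v) = n - s$ where $s$ is the number of sinks, while by T\ref{rule:tree-all-edges-directed} this sum equals $|E| = n-1$, forcing $s=1$. Your argument is shorter, gives existence and uniqueness in one identity, and needs the tree structure only through the edge count $n-1$ (connectedness is what rules out $s>1$: in a forest with $c$ components the same count gives $s=c$). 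What the paper's inductive leaf-removal argument buys is that it is the template the authors reuse for the harder cases: the proofs for chordal graphs (simplicial-vertex removal) and $K_4$-free dismantlable graphs (dominated-vertex removal) follow exactly this elimination-order induction, whereas your counting argument does not extend there, since in those settings non-sink vertices may have several outgoing edges and the out-degree sum no longer determines the number of sinks. Your two side remarks (that T\ref{rule:tree-all-edges-directed} plus the bit-encoding yields a well-defined orientation, and that both conditions are checkable from $B_1(v)$) are accurate and harmless.
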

\begin{proof}
    We prove the theorem by induction on the number of vertices of $T$. The base case of a single vertex is trivial. Suppose $T$ contains at least two vertices and let $v$ be a leaf of $T$. Let $\vec{T}$ be a configuration of $T$ and assume that $\vec{T}$ satisfies T\ref{rule:tree-all-edges-directed} and T\ref{rule:tree-one-neighbor}. Observe that the configuration $\vec{T} \setminus \{v\}$ of $T \setminus \{v\}$ also satisfies T\ref{rule:tree-all-edges-directed} and T\ref{rule:tree-one-neighbor}. By the induction hypothesis $\vec{T} \setminus \{v\}$ has a unique sink $t$. We show that $\vec{T}$ also has a unique sink.

    Call $u$ the neighbor of $v$ in $T$. Assume $vu$ is oriented $\overrightarrow{uv}$. Then from T\ref{rule:tree-one-neighbor}, $u$ has no other outneighbor in $\vec{T}$, so $u$ is the unique sink in $\vec{T} \setminus \{v\}$. Since $u$ has an outgoing edge in $\vec{T}$, it is not a sink, but $v$ is. Thus, $v$ is the unique sink in $\vec{T}$. Assume now $vu$ is oriented $\overrightarrow{vu}$. 
    Then $t$ remains a sink in $\vec{T}$ and $v$ is not a sink, so $t$ is the unique sink. Hence, in all cases $\vec{T}$ has a unique sink, completing the proof.
\end{proof}

%\subsection{Unique Sink and Spanning Tree Construction in Chordal Graphs}\label{sec:chordal}
\subsection{Unique Sink and Acyclic Orientation in Chordal Graphs}\label{sec:chordal}
We generalize the labeling scheme from Section~\ref{sec:trees} to a
set of constraints guaranteeing acyclicity and a unique sink in chordal
graphs.  A graph $G=(V,E)$ is chordal if every induced cycle is
  of length three. Every chordal graph has a simplicial vertex $v$,
  that is, a vertex $v$ such that $N(v)$ induces a complete graph. As
  the class of chordal graphs is closed under induced subgraphs%\MK{should this be ``induced subgraphs'' or it is terminology?}\JC{OK: both seem to exist}
  , any
  chordal graph admits a \emph{simplicial order}, i.e., an ordering
  $v_1, \ldots, v_n$ of $V$ such that for each $i$, $v_i$ is a
  simplicial vertex in $G[\{v_1,\ldots,v_i\}]$. In our labeling
  scheme, each node's label is the orientation of its incident
  edges. We refer to the nodes of $G$ and the labeling as a
  configuration $\vec{G}$.  We define a configuration to be correct
for leader election, if for every vertex $v$ of a chordal graph, the
following local constraints hold:

% We generalize the labeling scheme from Section~\ref{sec:trees} to a set of conditions guaranteeing acyclicity and a unique sink in chordal graphs. Each node's label is the orientation of its incident edges. We refer to the nodes of $G$ and the labeling as a configuration $\vec{G}$. A graph is chordal if every induced cycle is of length three. Every chordal graph has a simplicial vertex $v$, that is, a vertex $v$ such that $N(v)$ is a clique. 
% %{}
% We define a configuration to be correct for leader election, if for every vertex $v$ of a chordal graph, the following local conditions hold:

\begin{enumerate}[C1]
    \item\label{rule:chordal-all-edges-directed} Each edge incident to $v$ is directed.
    \item\label{rule:chordal-clique-outneighbors} All outneighbors of $v$ are adjacent, that is, $N^+(v)$ induces a complete graph.
    \item\label{rule:chordal-no-cyclic-triangles} Vertex $v$ does not belong to any directed triangle.
\end{enumerate}

Observe that checking C\ref{rule:chordal-no-cyclic-triangles} is possible, since the \emph{view} by definition contains the orientation of edges between neighboring nodes. 

  \begin{observation}[Existence]\label{obs:chordal-labeling-exists}
    Any chordal graph $G =(V,E)$ admits an acyclic orientation $\vec{G}$ satisfying C\ref{rule:chordal-all-edges-directed}--C\ref{rule:chordal-no-cyclic-triangles}.
  \end{observation}
  \begin{claimproof}
    Consider a simplicial order $v_1,v_2,\ldots,v_n$ of 
    $G$. Orient each edge $v_iv_j \in E$ as
    $\overrightarrow{v_iv_j}$ if and only if $i > j$ in the simplicial
    order.  This orientation is acyclic and satisfies
    C\ref{rule:chordal-all-edges-directed} and
    C\ref{rule:chordal-no-cyclic-triangles}. For
    C\ref{rule:chordal-clique-outneighbors}, note that for any $i$,
    $N^+(v_i) = N(v_i) \cap \{v_j: j <i\}$. Since $v_i$ is a
    simplicial vertex in $G[\{v_1, \ldots, v_i\}]$, $N^+(v_i)$ induces
    a complete graph and C\ref{rule:chordal-clique-outneighbors} holds
    at $v_i$.
  \end{claimproof}

\begin{lemma}\label{lem:chordal-acyclic}    
  In a chordal graph $G$, any configuration $\vec{G}$ satisfying
  C\ref{rule:chordal-all-edges-directed}--C\ref{rule:chordal-no-cyclic-triangles}
  is an acyclic orientation of $G$.
\end{lemma}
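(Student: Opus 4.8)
The plan is to argue by contradiction: suppose $\vec{G}$ satisfies C\ref{rule:chordal-all-edges-directed}--C\ref{rule:chordal-no-cyclic-triangles} but contains a directed cycle, and take a directed cycle $W = v_0 v_1 \cdots v_{k-1} v_0$ of minimum length $k$. Since $G$ is chordal and a directed cycle is in particular a closed walk, the induced subgraph on the vertices of $W$ — or more carefully, the cycle itself viewed inside $G$ — must have a chord whenever $k \geq 4$, because a chordless cycle in a chordal graph has length $3$. The heart of the argument is then to show that $k = 3$ is impossible (it would directly contradict C\ref{rule:chordal-no-cyclic-triangles}) and that $k \geq 4$ leads to a shorter directed cycle, contradicting minimality.

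First I would handle the case $k \geq 4$. By chordality there is a chord $v_i v_j$ of the cycle $W$ with $j \neq i\pm 1 \pmod k$. This chord is directed, say from $v_i$ to $v_j$ (the other orientation is symmetric after relabelling). The chord splits $W$ into two closed walks: $C_1 = v_i v_{i+1} \cdots v_j v_i$ (using the directed arc of $W$ from $v_i$ to $v_j$ together with the reverse of the chord) and $C_2 = v_j v_{j+1} \cdots v_i v_j$ (using the directed arc of $W$ from $v_j$ to $v_i$ together with the chord $v_i v_j$). Exactly one of these two walks is a directed cycle — namely $C_2$, which follows the orientation of $W$ along the path from $v_j$ back to $v_i$ and then traverses the chord $\overrightarrow{v_i v_j}$ in its forward direction. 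Since both arcs of $W$ strictly shorten when one chord replaces a subpath, $C_2$ has length strictly less than $k$ and at least $3$, contradicting minimality of $k$. The only subtlety here is checking that $C_2$ really is consistently oriented: along $v_j, v_{j+1}, \dots, v_i$ it inherits the arcs of $W$, and the closing edge $\overrightarrow{v_i v_j}$ points the right way by our choice of chord orientation.

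Then I would rule out $k = 3$: a directed cycle of length $3$ is precisely a directed triangle through each of its vertices, which is forbidden by C\ref{rule:chordal-no-cyclic-triangles}. Since both $k = 3$ and $k \geq 4$ are impossible, there is no directed cycle, so $\vec{G}$ is acyclic. I do not expect to need C\ref{rule:chordal-clique-outneighbors} for this particular lemma — it is presumably used later to get the \emph{unique} sink — so it can be ignored here. The main obstacle is the bookkeeping in the $k \geq 4$ case: one must be careful that the chord (which is guaranteed to exist in $G$, and is directed by C\ref{rule:chordal-all-edges-directed}) genuinely produces a shorter \emph{directed} cycle regardless of its orientation, and that "minimum length directed cycle" is well defined, i.e. that we do not accidentally create a length-$2$ cycle (impossible in a simple graph) or reuse the chord in a way that breaks the walk into something that is not a cycle. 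Making the "exactly one of $C_1, C_2$ is directed" claim precise is the crux.
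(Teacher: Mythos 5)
Your proposal is correct and follows essentially the same route as the paper's proof: take a minimum-length directed cycle, rule out length $3$ via C\ref{rule:chordal-no-cyclic-triangles}, use chordality to obtain a chord when $k \geq 4$, and observe that whichever way the chord is oriented it closes a strictly shorter directed cycle, contradicting minimality. The paper phrases this by casing on the chord's orientation while you fix it WLOG and identify the directed subcycle, but the argument is the same (and, as you note, C\ref{rule:chordal-clique-outneighbors} is indeed not needed here).
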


\begin{proof}    
  Suppose the lemma statement does not hold and $\vec{G}$ contains a
  directed cycle. Take $\mathsf{C} = (u_0, u_1, \ldots, u_{k-1})$ to
  be the smallest such cycle. From
  C\ref{rule:chordal-no-cyclic-triangles}, $k > 3$. Then without loss
  of generality, $u_1$ is adjacent to at least three edges:
  $\overrightarrow{u_0u_1}$, $\overrightarrow{u_1u_2}$ and some
  directed edge $u_1u_i$ for $3 \leq i < k$. Depending on the
  orientation of $u_1u_i$, $\vec{G}$ contains either the directed cycle
  $\mathsf{C}' = (u_1, u_i, u_{i+1}, \ldots, u_{k-1}, u_0)$ or the
  directed cycle $\mathsf{C}'' = (u_1, u_2, \ldots, u_i)$. In either
  case, a smaller cycle than $\mathsf{C}$ exists in $G$,
  contradiction.
\end{proof}

\begin{lemma}\label{lem:chordal-unique-sink}
  In a chordal graph $G$, any configuration $\vec{G}$ satisfying
  C\ref{rule:chordal-all-edges-directed}--C\ref{rule:chordal-no-cyclic-triangles}
  contains a unique sink.
\end{lemma}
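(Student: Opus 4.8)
The plan is to proceed by induction on the number of vertices of $G$, mirroring the tree argument of \cref{th:tree-certificate-correct} but using a simplicial vertex in place of a leaf. First I would establish existence of a sink: by \cref{lem:chordal-acyclic} the orientation $\vec G$ is acyclic, and every finite acyclic orientation has at least one sink, so it suffices to prove uniqueness. For the base case, a single vertex is trivially its own unique sink. For the inductive step, let $v$ be a simplicial vertex of $G$, so $N(v)$ is a clique, and consider $G' = G \setminus \{v\}$ with the induced configuration $\vec{G'}$. One checks that $\vec{G'}$ still satisfies C\ref{rule:chordal-all-edges-directed}--C\ref{rule:chordal-no-cyclic-triangles} (removing a vertex cannot create undirected edges, enlarge an outneighborhood, or create a directed triangle), and $G'$ is chordal, so by the induction hypothesis $\vec{G'}$ has a unique sink $t$.

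Next I would analyze how reintroducing $v$ affects the set of sinks. There are two cases depending on whether $v$ has an outgoing edge in $\vec G$. If $v$ is a sink in $\vec G$ (all incident edges incoming), I need to show that no other vertex is a sink, i.e. that $t$ acquires an outgoing edge. If $v$ is not a sink in $\vec G$, then $v$ has at least one outneighbor, and I must show $t$ remains a sink in $\vec G$ (it has no neighbor $v$ issue only if $v \in N^+(t)$, but then $t \to v$ contradicts $t$ being a sink in a sense to be handled) and that $v$ does not become a second sink — the latter is immediate since $v$ has an outgoing edge. The crux is the first case: when $v$ is a sink in $\vec G$, every neighbor $w \in N(v)$ has the edge $w \to v$ oriented toward $v$. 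Since $N(v)$ is a clique, for any two neighbors $w, w'$ of $v$ the triangle $v, w, w'$ is present; by C\ref{rule:chordal-no-cyclic-triangles} it is not a directed triangle, and since both $w \to v$ and $w' \to v$ hold, the edge between $w$ and $w'$ is forced (it cannot close a directed triangle, but in fact with both edges pointing to $v$ no orientation of $ww'$ creates one — I need a sharper observation here).

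The sharper observation I would use is the following. Suppose for contradiction that $\vec G$ has two distinct sinks; since $\vec{G'}$ has the unique sink $t$, one of the two sinks must be $v$, and the other is $t$ (as every sink of $\vec G$ other than $v$ is a sink of $\vec{G'}$). So $v$ and $t$ are both sinks of $\vec G$. By acyclicity (\cref{lem:chordal-acyclic}) and finiteness, consider a directed path; more directly, I claim that in an acyclic orientation satisfying C\ref{rule:chordal-clique-outneighbors}, I can walk from any non-sink vertex along outgoing edges and the reachable set behaves well. Actually the cleanest route: in $\vec{G'}$, the vertex $t$ is the unique sink, and I will argue $v$ must be adjacent to $t$ or to a vertex forced to point to $t$, deriving that $v$ has an outgoing edge after all. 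I would use C\ref{rule:chordal-clique-outneighbors} at $v$: if $v$ had two or more outneighbors they would be adjacent, but this does not directly bound the count; instead I suspect the intended argument reverses the tree proof — when $v$ is a sink, pick any neighbor $u$ of $v$; $u$ has the outgoing edge $u \to v$, and by C\ref{rule:chordal-clique-outneighbors} all outneighbors of $u$ including $v$ form a clique, and one shows $u$ is then the unique sink of $\vec{G'}$, forcing $t = u$ and $t$ to have outgoing edge $t \to v$ in $\vec G$, so $t$ is not a sink of $\vec G$ and $v$ is the unique sink.

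The main obstacle I anticipate is precisely pinning down, in the case where $v$ is a sink of $\vec G$, why a chosen neighbor $u$ of $v$ must be the unique sink of $\vec{G'}$ — this is where C\ref{rule:chordal-clique-outneighbors} together with chordality and acyclicity must be combined carefully, and a naive argument ("$u$ has only the edge to $v$ going out") is false since $u$ may have outneighbors inside $N(v)$. I would resolve this by taking $u$ to be a sink of $\vec{G'}$ restricted appropriately, or by iterating: among the neighbors of $v$, follow outgoing edges; since $N(v)$ is a clique and the orientation is acyclic, the tournament induced on $N(v)$ has a unique source/sink structure, so there is a vertex $u^* \in N(v)$ that is a sink within $N(v) \cup \{v\}$ in the sense that within that clique only $v$ is its outneighbor; then any other outneighbor of $u^*$ lies outside $N[v]$, and C\ref{rule:chordal-clique-outneighbors} applied to $u^*$ forces it adjacent to $v$ — which it is — but also forces a clique containing vertices outside and inside, eventually leading to a contradiction with minimality or with $t \neq u^*$. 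I will present the final write-up once I confirm which of these two framings the author uses, but either way the structure is induction on $|V|$ with a simplicial vertex, reducing to the two sink-vs-nonsink cases, with the sink case carrying all the weight.
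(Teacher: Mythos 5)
Your overall plan is the paper's plan: induction on $|V(G)|$ with a simplicial vertex $v$, the induced configuration on $G\setminus\{v\}$ still satisfying C\ref{rule:chordal-all-edges-directed}--C\ref{rule:chordal-no-cyclic-triangles}, and a case split on whether $v$ is a sink of $\vec{G}$; your vertex $u^*$ (the sink of the acyclic tournament that the orientation induces on the clique $N(v)$) is exactly the vertex $t'$ the paper uses, and your observation that existence of a sink follows from \cref{lem:chordal-acyclic} legitimately trivializes the non-sink case (any sink of $\vec{G}$ other than $v$ is a sink of $\vec{G}\setminus\{v\}$, hence equals $t$), where the paper instead shows directly via a directed triangle and C\ref{rule:chordal-no-cyclic-triangles} that $t$ stays a sink.

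However, the decisive case — $v$ a sink of $\vec{G}$ — is not actually closed in your write-up, and this is a genuine gap. What must be shown is that $N^+(u^*)=\{v\}$, and your sketch ("forces a clique containing vertices outside and inside, eventually leading to a contradiction with minimality or with $t\neq u^*$") is not an argument; no minimality is in play. The missing step is a one-line application of C\ref{rule:chordal-clique-outneighbors} at $u^*$: since $v$ is a sink, $v\in N^+(u^*)$, so every other outneighbor $w$ of $u^*$ must be adjacent to $v$, i.e.\ $w\in N(v)$ — contradicting the choice of $u^*$ as having no outneighbor inside $N(v)$. (This is the same use of C\ref{rule:chordal-clique-outneighbors} by which the paper gets $N^+(w)\subseteq N[v]$ for every $w\in N(v)$.) Hence $N^+(u^*)=\{v\}$, so $u^*$ is a sink of $\vec{G}\setminus\{v\}$ and by the induction hypothesis $u^*=t$; since $t\to v$, $t$ is not a sink of $\vec{G}$, so $v$ is the unique sink. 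Relatedly, your parenthetical in the non-sink case ("$t\to v$ contradicts $t$ being a sink in a sense to be handled") is confused — under your existence-via-acyclicity framing nothing needs handling there, and if you did want to show $t$ remains a sink you would need the paper's C\ref{rule:chordal-no-cyclic-triangles} triangle argument, which you do not supply. With the $N^+(u^*)=\{v\}$ step added and the final framing fixed, your proof is correct and essentially the paper's.
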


\begin{proof}  
   We prove the lemma by induction on the number of vertices of $G$.    
  The base case holds for $G$ having one vertex. For $G$ with at least two vertices, let $v$ be a simplicial vertex, that is, $N(v)$ induces a complete subgraph. Consider a configuration $\vec{G}$ of $G$ satisfying C\ref{rule:chordal-all-edges-directed}--C\ref{rule:chordal-no-cyclic-triangles}. Removing $v$ from $\vec{G}$ gives a configuration of $G \setminus \{v\}$ that also satisfies C\ref{rule:chordal-all-edges-directed}--C\ref{rule:chordal-no-cyclic-triangles}, so by the induction hypothesis $\vec{G} \setminus \{v\}$ has a unique sink $t$.
      
        Suppose first that $v$ has an outneighbor $u$ in $\vec{G}$. We claim that in this case, $t$ is also the unique sink of $\vec{G}$. Due to $\vec{G} \setminus \{v\}$ having a unique sink, no vertex other than $t$ can be a sink in $\vec{G}$. If $t$ is not a sink in $\vec{G}$, then $\overrightarrow{tv} \in \vec{G}$, as all remaining edges incident to $t$ are incoming since $t$ is a sink in $\vec{G} \setminus \{v\}$. Since $u,t \in N(v)$ and $v$ is a simplicial vertex, $u \in N^-(t)$. But then $tvu$ is  a directed triangle of $\vec{G}$, contradicting C\ref{rule:chordal-no-cyclic-triangles}. Therefore, $t$ is the unique sink of of $\vec{G}$.

        Suppose now that $v$ has no outneighbor, (i.e., $v$ is a sink of $\vec{G}$). For any neighbor $u$ of $v$, any outneighbor $w$ of $u$ in $\vec{G} \setminus \{v\}$ must be a neighbor of $v$ in $\vec{G}$ from C\ref{rule:chordal-clique-outneighbors}. Consequently, $N^+(u) \subseteq N[v]$. We claim that there exists a vertex $t' \in N(v)$ that has no outneighbor in $N(v)$. Suppose otherwise. Then, every $u \in N(v)$ has an outneighbor in $N(v)$ and $\vec{G} \setminus \{v\}$ contains a cycle, which is a contradiction by Lemma \ref{lem:chordal-acyclic}. Since $N(v)$ is a complete graph, this $t'$ is unique. Furthermore, as we have already shown that $t'$ has no outneighbor outside of $N[v]$, $t'$ is the unique sink $t$ of $\vec{G} \setminus \{v\}$. Since $t'=t \in N^-(v)$, $t$ is not a sink in $\vec{G}$. Therefore, $v$ is the unique sink of $\vec{G}$.
    \end{proof}

    Theorem \ref{th:chordal-certificate-correct} directly follows from % \cref{obs:chordal-labeling-exists} and
    \cref{lem:chordal-acyclic,lem:chordal-unique-sink}.

\begin{theorem}[Soundness]\label{th:chordal-certificate-correct}    
    In a chordal graph $G$, any configuration $\vec{G}$ satisfying C\ref{rule:chordal-all-edges-directed}--C\ref{rule:chordal-no-cyclic-triangles} is an acyclic orientation of $G$ containing a unique sink.
\end{theorem}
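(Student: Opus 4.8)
The plan is to obtain Theorem~\ref{th:chordal-certificate-correct} as an immediate corollary of the three results already established: the Existence part from \cref{obs:chordal-labeling-exists}, acyclicity from \cref{lem:chordal-acyclic}, and the unique-sink property from \cref{lem:chordal-unique-sink}. So there is essentially nothing new to prove; the theorem is a packaging statement. Concretely, fix a chordal graph $G$ and a configuration $\vec G$ satisfying C\ref{rule:chordal-all-edges-directed}--C\ref{rule:chordal-no-cyclic-triangles}. By \cref{lem:chordal-acyclic} the orientation $\vec G$ is acyclic, and by \cref{lem:chordal-unique-sink} it has exactly one sink. Together these two facts are precisely the assertion of the theorem, so the proof is a single sentence invoking both lemmas.

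For completeness one should also note the Existence direction: \cref{obs:chordal-labeling-exists} exhibits, for every chordal graph, an orientation obtained by repeatedly orienting all edges incident to a simplicial vertex away from it and recursing on the graph of still-unoriented edges; this orientation satisfies C\ref{rule:chordal-all-edges-directed}--C\ref{rule:chordal-no-cyclic-triangles}. Combined with the soundness statement of the theorem, this shows $(C,R)$ with the edge-orientation encoding is a genuine certification scheme for leader election on chordal graphs (taking $\mathrm{out}$ to map the unique sink's certificate to $\textsf{L}$ and all others to $\textsf{NL}$), and moreover it certifies acyclicity of the orientation as a bonus --- a property, as remarked in the introduction, not verifiable with constant-size certificates in general graphs.

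Since the real content lives in the lemmas, I would not expect any obstacle here; the only thing to be careful about is making sure the statement of the theorem matches exactly the conjunction of the two lemmas (both quantified over ``any configuration $\vec G$ satisfying C\ref{rule:chordal-all-edges-directed}--C\ref{rule:chordal-no-cyclic-triangles}''), which it does. Thus the proof reads: by \cref{lem:chordal-acyclic}, $\vec G$ is acyclic; by \cref{lem:chordal-unique-sink}, $\vec G$ has a unique sink; hence $\vec G$ is an acyclic orientation of $G$ with a unique sink, as claimed. \qedhere is unnecessary since the paper already states ``Theorem~\ref{th:chordal-certificate-correct} directly follows from \cref{obs:chordal-labeling-exists} and \cref{lem:chordal-acyclic,lem:chordal-unique-sink}.'' before the statement, so in fact the cleanest presentation is to give no separate proof block at all, or at most a one-line \begin{proof} ... \end{proof} reiterating this.
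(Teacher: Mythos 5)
Your proposal is correct and matches the paper exactly: the theorem is stated as an immediate consequence of \cref{lem:chordal-acyclic} (acyclicity) and \cref{lem:chordal-unique-sink} (unique sink), with \cref{obs:chordal-labeling-exists} supplying existence, and the paper likewise gives no separate proof block. Nothing further is needed.
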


Observe that by \cref{th:chordal-certificate-correct}, in order to construct a spanning tree in a chordal graph
satisfying
C\ref{rule:chordal-all-edges-directed}--C\ref{rule:chordal-no-cyclic-triangles},
it is enough that each  node different from the sink picks an outgoing edge.

\section{Dismantlable graphs}\label{sec:k4-free}
%This section is divided into two parts: a certification scheme ensuring a unique sink in $K_4$-free dismantlable graphs and a certification scheme for a rooted spanning tree, given a unique root, in dismantlable graphs.  
%We say that a vertex $v$ is \emph{dominated} by a vertex $u$ if $N[v] \subseteq N[u]$. A graph is dismantlable if there exists an ordering $v_1, \ldots, v_n$ of $V(G)$ such that in $G[\{v_1, \ldots, v_i\}]$, $v_i$ is dominated by some $v_j$ with $j < i$. So for each $i > 1$, there exists $j < i$ such that $N[v_i] \cap \{v_1, \ldots, v_i\} \subseteq N[v_j]$. %A graph is \emph{$K_4$-free} if it contains no subgraph isomorphic to a clique of four vertices.

This section is divided into three parts: a labeling scheme ensuring a
unique sink in dismantlable graphs, a labeling scheme for a rooted
spanning tree, given a unique root, in dismantlable graphs and the combination of the former two labeling schemes to ensure that any node in the graph can be elected.  Let
$G=(V,E)$ be a graph and $u,v$ be two vertices in $V$. We say that $v$
is \emph{dominated} by $u$ in $G$ if $N[v] \subseteq N[u]$. A graph is
dismantlable if there exists an ordering $v_1, \ldots, v_n$ of $V$
such that for each $i > 1$, in $G[\{v_1, \ldots, v_i\}]$, $v_i$ is
dominated by some $v_j$ with $j < i$. So for each $i > 1$, there
exists $j < i$ such that
$N[v_i] \cap \{v_1, \ldots, v_i\} \subseteq N[v_j]$.
The ordering $v_1, \ldots, v_n$ is called a \emph{dismantling order} of $G$.
We give an
example of a dismantlable graph $\exampleG$ in
\Cref{fig:dismantlable}. Observe that the only dominated vertex in
$\exampleG$ is $a$, which is dominated by $f$. In
$\exampleG\setminus\{a\}$, $b$ (resp. $c$) is dominated by $g$
(resp. $h$). In $\exampleG\setminus\{a,b,c\}$, $d$ (resp.  $e$) is
dominated by $i$. Then in $\exampleG\setminus\{a,b,c,d,e\}$, $f$ is
dominated by $g$; in $\exampleG\setminus\{a,b,c,d,e,f\}$, $g$ is
dominated by $h$; and finally, in
$\exampleG\setminus\{a,b,c,d,e,f,g\}$, $h$ is dominated by $i$. This
gives the dismantling order $(i,h,g,f,e,d,c,b,a)$. Each of the
``dominated by'' relations is depicted by a bold black directed edge
in \cref{fig:dismantling-order}. Additionally,
\cref{fig:dismantling-order} contains a directed edge from $v_i$ to
$v_j$ when $v_i$ appears later in the dismantling order than $v_j$.

\subsection{Leader election in  dismantlable graphs}\label{sec:elec-dism}
%Here we only consider dismantlable graphs.
As in Section \ref{sec:chordal}, the label at each node consists of an
orientation on the edges incident to the node and we write $\vec{G}$
to denote a configuration. We define a locally correct configuration
to be one that satisfies
D\ref{rule:all-edges-directed}--D\ref{rule:no-cyclic-triangles} for
every node $v$ of $G$. We show that if all rules are satisfied, $G$
contains a unique sink.
\begin{enumerate}[D1]
    \item\label{rule:all-edges-directed} Each edge incident to $v$ is directed.
    \item\label{rule:dominating-neighbor} Either $v$ has no outgoing edges, or there exists an outneighbor $v^+$ of $v$  adjacent to all other outneighbors of $v$, i.e,  $N^+(v) \subseteq N[v^+]$.
    \item\label{rule:no-cyclic-triangles} Vertex $v$ does not belong to any directed triangle.
\end{enumerate}

Examples of two orientations of the graph $\exampleG$ of
\Cref{fig:dismantlable} satisfying
D\ref{rule:all-edges-directed}--D\ref{rule:no-cyclic-triangles} are
given in Figures~\ref{fig:dismantling-order}
and~\ref{fig:not-acyclic}. In the next lemma, we show that any
dismantling order of the vertices of a graph $G$ allows to derive an
acyclic orientation of $G$ satisfying
D\ref{rule:all-edges-directed}--D\ref{rule:no-cyclic-triangles}.  The
orientation in \cref{fig:dismantling-order} is precisely the
orientation derived from the dismantling order $(i,h,g,f,e,d,c,b,a)$
of the graph $\exampleG$.

\begin{observation}[Existence]\label{obs:dismantlable-labeling-exists}
    Any dismantlable graph $G=(V,E)$ admits an acyclic orientation $\vec{G}$ satisfying D\ref{rule:all-edges-directed}--D\ref{rule:no-cyclic-triangles}.
\end{observation}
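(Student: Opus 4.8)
The plan is to mimic the construction in \cref{obs:chordal-labeling-exists}, building the orientation inductively along a dismantling order, but being careful because domination in dismantlable graphs is weaker than simpliciality: when we remove a dominated vertex $v_i$, its neighbors among earlier vertices need not form a clique. First I would fix a dismantling order $v_1, \ldots, v_n$ of $G$ and argue by induction on $n$ that $G$ admits an acyclic orientation satisfying D\ref{rule:all-edges-directed}--D\ref{rule:no-cyclic-triangles}. Let $v = v_n$ and let $u = v_j$ be a vertex dominating $v$ in $G$, so $N[v] \subseteq N[u]$ (in the whole graph, since $v_n$ is last). By induction, $G' = G \setminus \{v\}$ has such an orientation $\vec{G'}$; I then need to orient the edges incident to $v$ and check the three rules at $v$ and re-check D\ref{rule:dominating-neighbor}, D\ref{rule:no-cyclic-triangles} at the neighbors of $v$ (D\ref{rule:all-edges-directed} is immediate once we orient every incident edge).

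The natural choice is to orient \emph{every} edge incident to $v$ toward $N(v)$, i.e.\ make $v$ a source. Then $v$ has no chance of violating D\ref{rule:no-cyclic-triangles} in a way creating a directed triangle through $v$ only if $N(v)$ contains no directed edge $\overrightarrow{xy}$ with $x,y \in N(v)$ — which is not guaranteed. So instead I would orient $v$'s edges using the dominating vertex $u$: put the edge $vu$ oriented $\overrightarrow{vu}$, and for every other $w \in N(v)$, copy the orientation of the edge $uw$ in $\vec{G'}$ onto $vw$ (this makes sense because $w \in N(v) \subseteq N[u]$, so $uw \in E(G')$). Concretely, $\overrightarrow{vw}$ if $\overrightarrow{uw} \in \vec{G'}$ and $\overrightarrow{wv}$ if $\overrightarrow{wu} \in \vec{G'}$. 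With this choice $v$'s out-neighborhood is $N^+(v) = \{u\} \cup (N(v) \cap N^+_{\vec{G'}}(u)) \subseteq N^+_{\vec{G'}}(u) \cup \{u\} \subseteq N[u]$, and since $u$ is adjacent to $v$ and to all these vertices, $u$ itself is the witness $v^+$ required by D\ref{rule:dominating-neighbor} at $v$; and any directed triangle through $v$ would descend, via the "copy" relation, to a directed triangle or directed cycle through $u$ in $\vec{G'}$, which is impossible since $\vec{G'}$ is acyclic — here I would spell out the two cases $\overrightarrow{vw}\,\overrightarrow{wz}\,\overrightarrow{zv}$ etc., translating each into a contradiction in $\vec{G'}$. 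Acyclicity of the new orientation follows: a directed cycle through $v$ would enter $v$ via some $\overrightarrow{wv}$ and leave via some $\overrightarrow{vw'}$, giving $\overrightarrow{wu}$ and $\overrightarrow{uw'}$ in $\vec{G'}$, so replacing the length-2 piece $w\to v\to w'$ by $w\to u\to w'$ yields a closed directed walk in $\vec{G'}$, contradicting acyclicity (one must check $w \ne w'$, which holds because otherwise $\overrightarrow{wv}$ and $\overrightarrow{vw}$ would both be present).

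The remaining obligation is to verify that adding $v$ does not break D\ref{rule:dominating-neighbor} or D\ref{rule:no-cyclic-triangles} at the old vertices $w \in N(v)$. For D\ref{rule:no-cyclic-triangles}: a new directed triangle must use $v$, and we have just handled those. For D\ref{rule:dominating-neighbor} at a vertex $w \in N(v)$ with $\overrightarrow{wv} \in \vec{G}$: the out-neighborhood of $w$ gains the vertex $v$, so I must produce an out-neighbor of $w$ adjacent to all of $N^+(w) = N^+_{\vec{G'}}(w) \cup \{v\}$. Here is where I would use the structure: since $\overrightarrow{wv} \in \vec{G}$ we have $\overrightarrow{wu} \in \vec{G'}$ by construction, so $u \in N^+_{\vec{G'}}(w)$; let $w^+$ be the witness for $w$ in $\vec{G'}$, i.e.\ $N^+_{\vec{G'}}(w) \subseteq N[w^+]$, so in particular $u \in N[w^+]$. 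I claim $w^+$ still works in $\vec{G}$: it remains an out-neighbor of $w$ and is adjacent to everything in $N^+_{\vec{G'}}(w)$; it remains to see $w^+$ is adjacent to $v$ (or equals $v$). If $w^+ = u$ then $w^+ \in N(v)$, done. Otherwise $w^+ \in N(u)$, and — this is the delicate point — I need $w^+ \in N[v]$. This does not follow from $N[v] \subseteq N[u]$ alone, so the argument as sketched has a gap exactly here, and \textbf{this is the step I expect to be the main obstacle}: one likely needs either the $K_4$-freeness hypothesis to rule out the bad configuration, or a more clever choice of orientation for $v$'s edges (for instance, only orienting some edges toward $N(v)$ and leaving $v$ with a single out-edge, or choosing $u$ to be a \emph{closest} dominator), so that the set of newly-created out-neighbors at each $w$ is controlled. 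I would resolve it by strengthening the inductive hypothesis — e.g.\ additionally asserting that for each vertex the out-neighborhood together with its witness $v^+$ induces a $K_4$-free graph with a sink, matching the rules proved sound in \cref{lem:chordal-acyclic}-style arguments — and then checking the closure properties of that stronger hypothesis under the vertex-addition step above.
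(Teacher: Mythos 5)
The gap you flag at the end is real, but the deeper problem is that you discarded the construction that actually works because of a misreading of D\ref{rule:no-cyclic-triangles}. In this paper a \emph{directed triangle} is a directed cycle of length three, so a vertex with no incoming edges can never lie on one; the configuration you worry about when making $v$ a source ($\overrightarrow{vx}$, $\overrightarrow{xy}$, $\overrightarrow{vy}$ with $x,y\in N(v)$) is a transitive triangle and violates nothing. Hence your ``natural choice'' is correct as stated: take $v=v_n$, orient every edge incident to $v$ away from $v$, and keep the inductive orientation of $G\setminus\{v\}$. Then D\ref{rule:no-cyclic-triangles} at $v$ is vacuous and no directed cycle can pass through the source $v$, so acyclicity is preserved; D\ref{rule:dominating-neighbor} at $v$ holds with witness $v^+=u$, since $N^+(v)=N(v)\subseteq N[u]$ and $u\in N(v)$; and no old vertex gains an out-neighbor, so D\ref{rule:dominating-neighbor} and D\ref{rule:no-cyclic-triangles} at the vertices of $G\setminus\{v\}$ are untouched. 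This is exactly the paper's proof, stated globally rather than inductively: orient each edge $v_iv_j$ as $\overrightarrow{v_iv_j}$ if and only if $i>j$ in the dismantling order; the orientation follows a linear order, hence is acyclic and satisfies D\ref{rule:all-edges-directed} and D\ref{rule:no-cyclic-triangles}, and for each $v_i$ with $N^+(v_i)\neq\emptyset$ the dominating vertex $v_j$ ($j<i$) is an out-neighbor adjacent to all other out-neighbors, witnessing D\ref{rule:dominating-neighbor}. Neither $K_4$-freeness nor any strengthened induction hypothesis is needed; the observation holds for all dismantlable graphs.

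Concerning your replacement construction (copying $u$'s orientations onto $v$'s edges and orienting $vu$ as $\overrightarrow{vu}$): the obstacle you identify is genuine. An in-neighbor $w$ of $v$ gains $v$ as an out-neighbor, and its witness $w^+$ from $\vec{G}\setminus\{v\}$ is only known to lie in $N[u]$; since the domination is $N[v]\subseteq N[u]$ and not the converse, nothing places $w^+$ in $N[v]$, so D\ref{rule:dominating-neighbor} at $w$ is not established and the proposal as written does not prove the statement. The repair is not a stronger inductive invariant but simply to create no in-edges at $v$ at all, as above.
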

\begin{claimproof}
  Consider a dismantling order $v_1,v_2,\ldots,v_n$ of $G$. Orient
  each edge $v_iv_j \in E$ as $\overrightarrow{v_iv_j}$ if and only if
  $i > j$ in the dismantling order.  This orientation $\vec{G}$ is
  acyclic and satisfies D\ref{rule:all-edges-directed} and
  D\ref{rule:no-cyclic-triangles}.  For
  D\ref{rule:dominating-neighbor}, consider any node $v_i$. If
  $i = 1$, then $N^+(v) = \emptyset$, and
  D\ref{rule:dominating-neighbor} is satisfied for $v$. Suppose now
  that $i > 1$. Since $v_1, v_2, \ldots, v_n$ is a dismantling order,
  there exists $j < i$ such that
  $N[v_i] \cap \{v_1, \ldots v_i\} \subseteq N[v_j]$. By the
  definition of $\vec{G}$,
  $N^+(v_i) = N(v_i) \cap \{v_1, \ldots v_i\}$ and thus
  $N^+(v_i) \subseteq N[v_j]$, i.e., D\ref{rule:dominating-neighbor}
  holds for $v_i$ with $v_i^+ = v_j$.
  % So suppose $N^+(v) \neq \emptyset$. By
  % dismantlability, every node $v$ is dominated by some later node $u$
  % in the dismantling order. Then $\overrightarrow{vu}$ is an edge of
  % the orientation and for every other outneighbor $w$ of $v$,
  % $w \in N(u)$. Hence, $N^+(v) \subseteq N[u]$, satisfying
  % D\ref{rule:dominating-neighbor}.
\end{claimproof}

Notice that in the case of dismantlable graphs, one cannot hope to
have a property similar to \cref{lem:chordal-acyclic}. As
in~\cite{chalopin2024stabilising}, one can construct configurations
satisfying
D\ref{rule:all-edges-directed}-D\ref{rule:no-cyclic-triangles} that
contain directed cycles (e.g., as in \Cref{fig:not-acyclic}) and are
thus not obtained from a dismantling order.
\begin{figure}[h]
    \centering
    \begin{minipage}[t]{.31\textwidth}
        \centering
        \includegraphics[scale=.45,page=7]{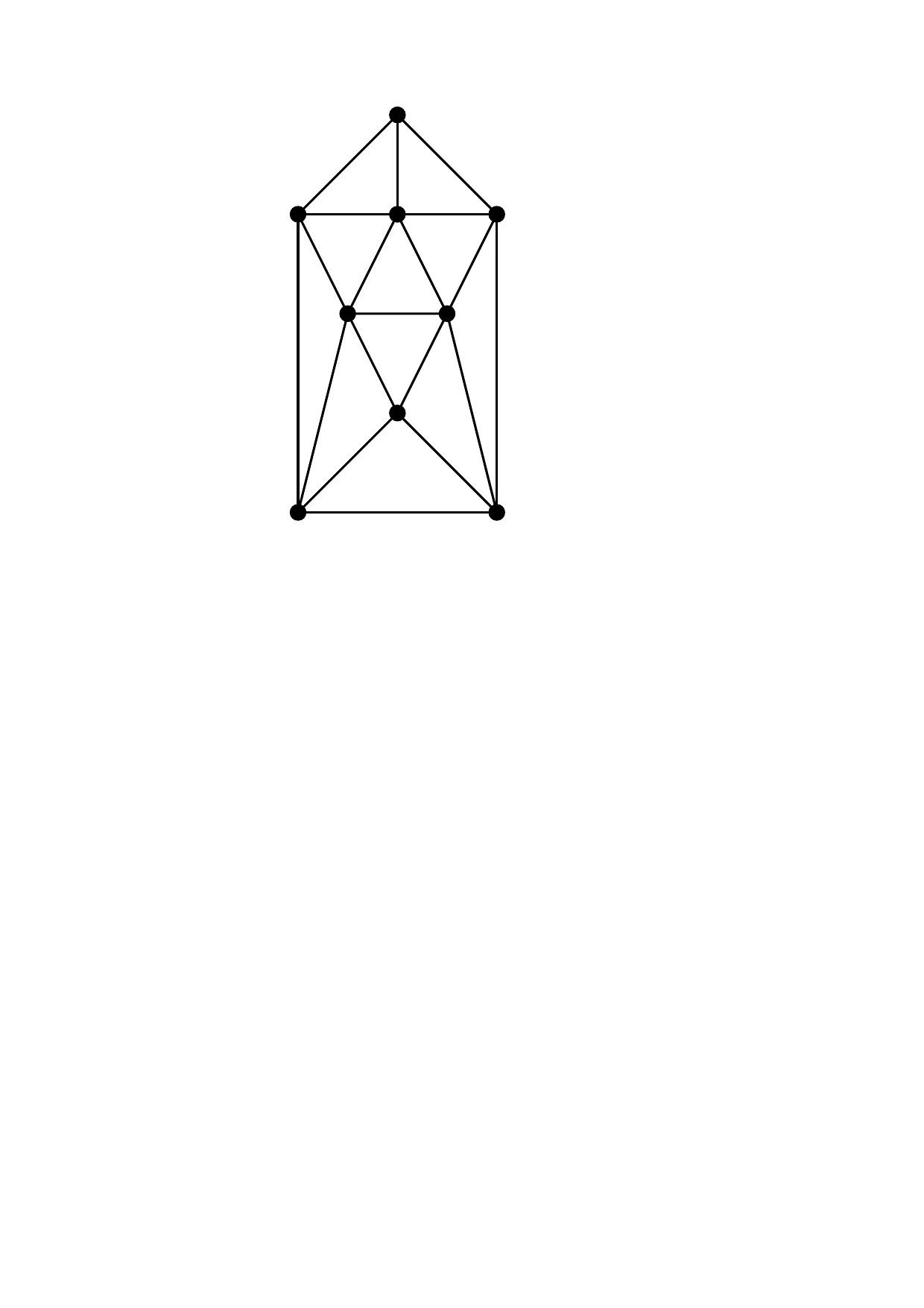}
        \subcaption{\centering}
        \label{fig:dismantlable}
    \end{minipage}
    \begin{minipage}[t]{.31\textwidth}
        \centering
        \includegraphics[scale=.45,page=11]{dismantlable.pdf}
        \subcaption{\centering}
        \label{fig:dismantling-order}
    \end{minipage}
    \begin{minipage}[t]{.31\textwidth}
        \centering
        \includegraphics[scale=.45,page=13]{dismantlable.pdf}
        \subcaption{\centering}
        \label{fig:not-acyclic}
    \end{minipage}
    % \begin{minipage}[t]{.49\textwidth}
    %     \centering
    %     \includegraphics[scale=.9]{k4-free-cycle.pdf}
    %     \subcaption{\centering}
    %     \label{fig:not-acyclic}
    % \end{minipage}
    \caption{(a) An example of a dismantlable graph $\exampleG$. (b)
      An acyclic orientation of $\exampleG$ that satisfies
      D\ref{rule:all-edges-directed}--D\ref{rule:no-cyclic-triangles}. (c)
      An orientation satisfying
      D\ref{rule:all-edges-directed}--D\ref{rule:no-cyclic-triangles}
      that contains a cycle (denoted by the orange edges).}
\end{figure} 
% \begin{figure}[h]
%     \centering
%     %\includegraphics[scale=.6]{dismantlable.pdf}\hspace{2cm}
%     \includegraphics[scale=.4,page=2]{dismantlable.pdf}
%     \caption{An example of a $K_4$-free dismantlable graph in which it is not possible for any node to be the sink. In particular, in this example the top node cannot be a sink without D\ref{rule:dominating-neighbor} being violated for one of the nodes incident to the dashed edge.}
%     \label{fig:dismantlable}
% \end{figure}

\begin{theorem}[Soundness]\label{th:certificate-K4-free}  
  Let $G$ be a dismantlable graph. If a configuration $\vec{G}$
  satisfies
  D\ref{rule:all-edges-directed}--D\ref{rule:no-cyclic-triangles},
  then $\vec{G}$ contains a unique sink.
\end{theorem}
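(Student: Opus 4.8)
The natural approach is induction on $|V(G)|$, mirroring the proofs of \cref{th:tree-certificate-correct} and \cref{lem:chordal-unique-sink}, but now exploiting the dismantling order rather than simpliciality. The base case of a single vertex is trivial. For the inductive step, let $v$ be the first vertex in a dismantling order of $G$, so $v$ is dominated in $G$ by some vertex $u \in N(v)$, i.e., $N[v] \subseteq N[u]$. Since $G$ is $K_4$-free and dismantlable, so is $G \setminus \{v\}$ (dismantlability is hereditary for this kind of deletion, and $K_4$-freeness is hereditary), and removing $v$ from $\vec{G}$ yields a configuration of $G \setminus \{v\}$ that still satisfies D\ref{rule:all-edges-directed}--D\ref{rule:no-cyclic-triangles}. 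By the induction hypothesis, $\vec{G} \setminus \{v\}$ has a unique sink $t$.

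I would then split into cases according to the orientation of the edges incident to $v$, and in each case identify the unique sink of $\vec{G}$. If $v$ has an outgoing edge, I expect $v$ itself not to be a sink, and the argument should show $t$ remains the unique sink of $\vec{G}$: the only danger is that $t$ acquires an outgoing edge in $\vec{G}$, which can only be $\overrightarrow{tv}$ (all other edges at $t$ were incoming in $\vec{G}\setminus\{v\}$), and one wants to derive a contradiction with D\ref{rule:no-cyclic-triangles} using the domination $N[v]\subseteq N[u]$ together with D\ref{rule:dominating-neighbor} at $v$ — roughly, if $v^+$ is the special outneighbor of $v$ and $\overrightarrow{tv}\in\vec G$, then $t,v,v^+$ (or $t$ together with whatever realizes the outgoing edge of $v$) should close up into a directed triangle. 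The delicate case is when $v$ is a sink of $\vec{G}$: then every neighbor $w$ of $v$ has $\overrightarrow{wv}\in\vec G$, and one must show some neighbor of $v$ is a sink in $\vec G$, which forces the old sink $t$ to no longer be one (since $\overrightarrow{tv}$ would then have to be present... — actually here one argues as in \cref{lem:chordal-unique-sink} that some $t'\in N(v)$ has no outneighbour, and then wants $t'$ to be a global sink). Unlike the chordal case, $N(v)$ need not be a clique, so I cannot directly invoke acyclicity of $\vec G\setminus\{v\}$ restricted to $N(v)$; instead I will have to lean on D\ref{rule:dominating-neighbor} at each neighbor of $v$ and on $K_4$-freeness to confine outneighbors and rule out the persistence of a cycle among neighbors of $v$.

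The main obstacle, and the place where this proof genuinely differs from the chordal one, is exactly this last point: because \cref{lem:chordal-acyclic} has no analogue here (as the paragraph before the theorem and \Cref{fig:not-acyclic} emphasize), I cannot use a global acyclicity argument to locate a sink among the neighbors of $v$. I would therefore expect to need a more careful combinatorial argument — possibly a secondary induction or a direct "follow the outneighbor pointers" argument — that uses the domination vertex $u$, the special outneighbor guaranteed by D\ref{rule:dominating-neighbor}, and the absence of $K_4$ to show that the outneighbor relation restricted to an appropriate neighborhood of $v$ cannot cycle, and hence terminates at a vertex that becomes the sink of $\vec G$. Tracking how the sink "moves" from $t$ in $\vec G\setminus\{v\}$ to the new sink in $\vec G$ — and ensuring uniqueness throughout — is where the bulk of the work will lie.
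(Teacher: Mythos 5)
Your skeleton (induction on $|V(G)|$, delete a dominated vertex $v$ with dominating vertex $u$, track how the sink moves) is indeed the paper's skeleton, but there is a genuine gap at the very first step of your inductive argument: you assert that removing $v$ from $\vec{G}$ "yields a configuration of $G \setminus \{v\}$ that still satisfies D\ref{rule:all-edges-directed}--D\ref{rule:no-cyclic-triangles}." That is false in general for D\ref{rule:dominating-neighbor}. If some vertex $w$ had $v$ as its only admissible witness $w^+$ (i.e., $N^+(w) \subseteq N[v]$ but no other outneighbor of $w$ dominates $N^+(w)\setminus\{v\}$), then after deleting $v$ condition D\ref{rule:dominating-neighbor} can fail at $w$, and the induction hypothesis cannot be applied. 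Establishing that one can (almost always) re-choose $w^+ \neq v$ is exactly where the paper's work lies: it needs the observation that any $w \in N^-(v)\cap N^-(u)$ can take $w^+ = u$ (\cref{lemma:orientation-wu}), D\ref{rule:no-cyclic-triangles} to force such $w$ into $N^-(u)$, and $K_4$-freeness in the case $N^+(u)\not\subseteq N[v]$. Moreover, in the subcase where $\overrightarrow{uv}\in\vec G$, $N^+(u)\subseteq N[v]$ and $N^+(u)\cap N^-(v)\neq\emptyset$ (\cref{case-212}), this repair is not possible and the paper does not apply induction to $\vec G\setminus\{v\}$ at all; instead it reorients the arcs from $u$ to $N^+(u)\cap N^-(v)$, verifies D\ref{rule:all-edges-directed}--D\ref{rule:no-cyclic-triangles} for the modified configuration $\vec G^*$ on the \emph{same} vertex set, and reduces to the previous subcase. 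Your plan contains no hint of this obstruction or of the reorientation maneuver needed to get around it.

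The second half of your proposal is, by your own admission, a plan rather than an argument: you correctly note that the chordal-style step (locating a sink inside $N(v)$ via acyclicity of a clique neighborhood) is unavailable, but you leave the replacement argument open ("a more careful combinatorial argument --- possibly a secondary induction or a direct follow-the-outneighbor-pointers argument"). In the paper this part is again resolved case by case, not by a single pointer-chasing lemma: e.g.\ when $\overrightarrow{vu}\in\vec G$ one shows the old sink $t$ cannot lie in $N^-(v)$ (else $t,v,u$ form a directed triangle); when $N^+(u)=N^+(v)\cup\{v\}$ one shows $v$ is a sink of $\vec G$ iff $u$ was the sink of $\vec G\setminus\{v\}$; and in the remaining subcases $K_4$-freeness together with the witness $u^+$ pins down whether $t$, $v$, or $u^+$ is the unique sink. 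Also note a small slip in your sketch: when $v$ is a sink of $\vec G$, the new unique sink is $v$ itself, not "some neighbor of $v$"; what must be shown is that the old sink of $\vec G\setminus\{v\}$ is adjacent to $v$ with its edge oriented toward $v$. As written, the proposal identifies the right framework but omits the two ideas that carry the proof: the preservation (or repair, via $\vec G^*$) of D\ref{rule:dominating-neighbor} under deletion, and the case analysis locating the unique sink without any acyclicity.
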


\begin{proof}
  Consider a graph $G$ and a configuration $\vec{G}$ of $G$ satisfying
  constraints
  D\ref{rule:all-edges-directed}--D\ref{rule:no-cyclic-triangles}. We
  prove the theorem by induction on the number of vertices in $G$. The
  theorem trivially holds if $G$ has one vertex. Suppose $G$ has at
  least two vertices. Let $v$ be a vertex dominated by some vertex $u$
  in $G$. 

  First, in the next lemma, we construct a simple directed path
  $(u_0=u, u_1, \ldots, u_k)$ in the neighborhood of $v$ starting at
  $u$ and enjoying interesting adjacency, orientation and domination
  properties that are fundamental in our proof. Intuitively, the path
  starts at a node $u = u_0$ dominating $v$, and we let
  $u_{i+1} = u_i^+$ until we reach a vertex $u_i$ such that either
  $u_i$ is an outneighbor of $v$, or such that we can set $v = u_i^+$.
  We show that all the vertices of this path are distinct neighbors of
  $v$ and that consequently, we eventually reach a vertex $u_i$ such
  that $u_i \in N^+(v)$ or $N^+(u_i) \subseteq N[v]$. An illustration
  of this construction is given in Figure~\ref{fig:Puk}.

  \begin{figure}[ht]
    \centering
    \hfill \includegraphics[scale=.45,page=2]{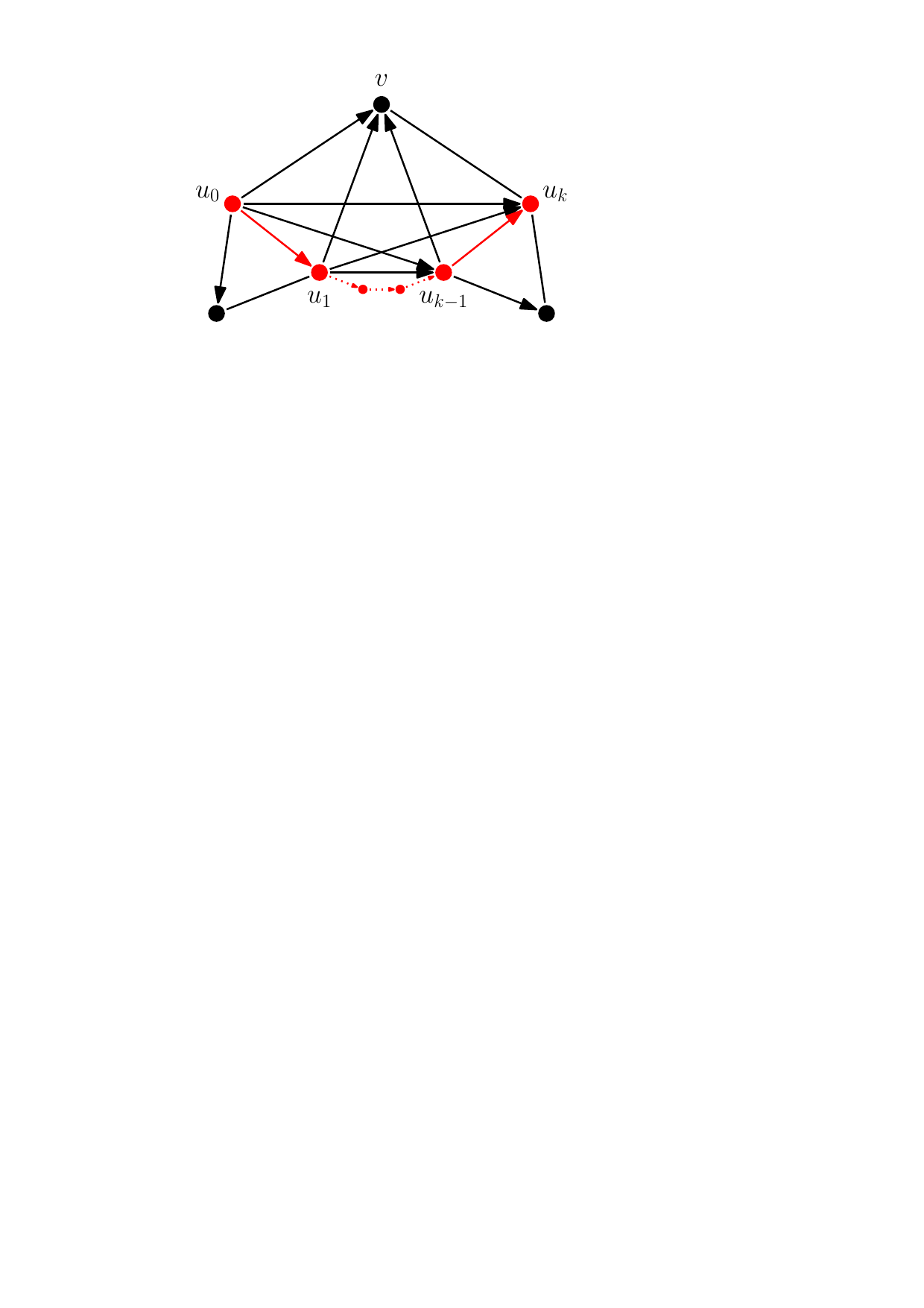}\hfill%
    \includegraphics[scale=.45,page=3]{ui-path}\hfill \     
    \caption{In red, the path $(u_0, u_1, \ldots, u_k)$ constructed in
      Lemma~\ref{lemma:construction-Puk}. For property
      (P\ref{claim-ind-7}), either $u_k \in N^+(v)$ (left) or
      $N^+(u_k) \subseteq N[v]$ (right).}
    \label{fig:Puk}
  \end{figure}

  \begin{lemma}\label{lemma:construction-Puk}
    There exists a sequence of distinct vertices
    $(u=u_0, u_1, \ldots, u_k)$ satisfying the following properties:
    \begin{enumerate}[{(P}1)]
    \item\label{claim-ind-1} $\forall 0 \leq i <k$, $v \in N^+(u_i)$,
      and
    \item\label{claim-ind-2} $\forall 0 \leq i <k$,
      $N^+(u_i) \not\subset N[v]$, and
    \item\label{claim-ind-3} $\forall 0 \leq i <k$,
      $u_{i+1} \in N^+(u_i)$ and $N^+(u_i) \subseteq N[u_{i+1}]$ (i.e.,
      $u_{i+1} = u_i^+$), and
    \item\label{claim-ind-4} $\forall 0 \leq i <j \leq k$,
      $u_j \in N^+(u_i)$, and
    \item\label{claim-ind-5} $v \in N(u_k)$, and
    \item\label{claim-ind-6} $\forall 0 \leq i \leq k$, if
      $v \in N^+(u_i)$, then $N^+(v) \subseteq N^+(u_i)$, and
    \item\label{claim-ind-7} either $v \in N^-(u_k)$ or
      $N^+(u_k) \subseteq N[v]$.
  \end{enumerate}
  \end{lemma}

  \begin{proof}
    We construct iteratively a sequence $(u=u_0, u_1, \ldots, u_k)$ of
    distinct vertices that satisfies properties (P\ref{claim-ind-1}),
    (P\ref{claim-ind-2}), (P\ref{claim-ind-3}), (P\ref{claim-ind-4}),
    (P\ref{claim-ind-5}), and (P\ref{claim-ind-6}) until we reach a
    vertex $u_k$ that satisfies (P\ref{claim-ind-7}).

    For $k = 0$, we let $u_0 = u$.  Observe that (P\ref{claim-ind-1}),
    (P\ref{claim-ind-2}), (P\ref{claim-ind-3}), (P\ref{claim-ind-4})
    hold by vacuity. Note that (P\ref{claim-ind-5}) holds since $u$
    dominates $v$. To show that (P\ref{claim-ind-6}) holds, suppose by
    contradiction that $v \in N^+(u_0)$ and that there exists
    $w \in  N^+(v) \setminus N^+(u_0)$. Since $u$
    dominates $v$, $w \in N[u_0]$ and $w \neq u_0$ as
    $u_0 \in N^-(v)$. Consequently, $w \in N^-(u_0)$. But then, we
    obtain a contradiction with D\ref{rule:no-cyclic-triangles}
    applied to the triangle $u_{0}vw$. This shows that
    (P\ref{claim-ind-6}) holds.

    Suppose now that we have constructed a sequence
    $(u_0, u_1, \ldots, u_k)$ of distinct vertices such that
    (P\ref{claim-ind-1}), (P\ref{claim-ind-2}), (P\ref{claim-ind-3}),
    (P\ref{claim-ind-4}), (P\ref{claim-ind-5}), and
    (P\ref{claim-ind-6}) hold. If $v \in N^-(u_k)$ or
    $N^+(u_k) \subseteq N[v]$, then the sequence
    $(u_0, u_1, \ldots, u_k)$ satisfies Properties
    (P\ref{claim-ind-1}), (P\ref{claim-ind-2}), (P\ref{claim-ind-3}),
    (P\ref{claim-ind-4}), (P\ref{claim-ind-5}), (P\ref{claim-ind-6}),
    and (P\ref{claim-ind-7}), and the lemma is proven.

    Suppose now that $v \in N^+(u_k)$ and $N^+(u_k) \not\subset
    N[v]$. Since $N^+(u_k) \neq \emptyset$, by
    D\ref{rule:dominating-neighbor}, there exists $u_k^+ \in N^+(u_k)$
    such that $N^+(u_k) \subseteq N[u_k^+]$. Let $u_{k+1} = u_k^+$ and
    note that $u_{k+1} \neq v$ since $N^+(u_k) \not\subset N[v]$ while
    $N^+(u_k) \subseteq N[u_k^+]$. We now show that
    $(u_0, u_1, \ldots, u_k, u_{k+1})$ is a sequence of distinct
    vertices satisfying (P\ref{claim-ind-1}), (P\ref{claim-ind-2}),
    (P\ref{claim-ind-3}), (P\ref{claim-ind-4}), (P\ref{claim-ind-5}),
    and (P\ref{claim-ind-6}).

    First observe that since $u_k \in N^-(u_{k+1})$ and
    $u_k \in N^+(u_i)$ for any $0 \leq i < k$ by (P\ref{claim-ind-4}),
    $u_{k+1}$ is necessarily distinct from $u_i$ for any
    $0 \leq i \leq k$. This shows that
    $(u_0, u_1, \ldots, u_k, u_{k+1})$ is a sequence of distinct
    vertices.  Since we have assumed that $v \in N^+(u_k)$ and
    $N^+(u_k) \not\subset N[v]$, (P\ref{claim-ind-1}) and
    (P\ref{claim-ind-2}) are satisfied. By the definition of
    $u_{k+1}$, (P\ref{claim-ind-3}) also holds, and since
    $v \in N^+(u_k) \subseteq N[u_{k+1}]$, (P\ref{claim-ind-5}) is
    also satisfied.

    % To establish (P\ref{claim-ind-4}), we show by induction on $i$
    % that for any $0 \leq i < k$, $u_{k+1} \in N^+(u_i)$. First observe
    % that for any $0 \leq i < k$, if $u_{k+1} \in N[u_i]$, then
    % $u_{k+1} \in N^+(u_i)$. Indeed, we know that $u_{k+1} \neq u_i$,
    % that $u_k \in N^+(u_i)$ by (P\ref{claim-ind-4}), and that
    % $u_{k+1} \in N^+(u_k)$ by the definition of $u_{k+1}$. By
    % D\ref{rule:no-cyclic-triangles} applied to the triangle
    % $u_iu_ku_{k+1}$, we get $u_{k+1} \in N^+(u_i)$.  Note that since
    % we have established that (P\ref{claim-ind-5}) holds,
    % $u_{k+1} \in N(v) \subseteq N[u] = N[u_0]$ and thus
    % $u_{k+1} \in N^+(u_0)$. Assume now that $u_{k+1} \in N^+(u_i)$ for
    % $i \leq k-2$. By (P\ref{claim-ind-3}), this implies that
    % $u_{k+1} \in N[u_{i+1}]$ and thus $u_{k+1} \in N^+(u_{i+1})$. This
    % shows that (P\ref{claim-ind-4}) holds. 

    We now consider (P\ref{claim-ind-4}). First, observe that $u_{k+1} \in N^+(u_k)$, by the
    definition of $u_{k+1}$, and thus we only need to show that
    $u_{k+1} \in N^+(u_i)$ for any $0 \leq i < k$.  Note that for any
    $0 \leq i < k$, if $u_{k+1} \in N(u_i)$, then
    $u_{k+1} \in N^+(u_i)$ by D\ref{rule:no-cyclic-triangles} applied
    to the triangle $u_iu_ku_{k+1}$, as $u_k \in N^+(u_i)$ and
    $u_{k+1} \in N^+(u_k)$. Consequently, for any $0 \leq i < k$,
    $u_{k+1} \in N(u_i)$ if and only if $u_{k+1} \in
    N^+(u_i)$. To establish (P\ref{claim-ind-4}), we thus show
    by induction on $i$ that for any $0 \leq i < k$,
    $u_{k+1} \in N(u_i)$. Since we have established that
    (P\ref{claim-ind-5}) holds,
    $u_{k+1} \in N(v) \subseteq N[u] = N[u_0]$ and thus
    $u_{k+1} \in N(u_0)$ as $u_{k+1} \neq u_0$. Suppose now that
    $u_{k+1} \in N(u_i)$ for $0 \leq i < k-1$. Since this implies that
    $u_{k+1} \in N^+(u_i)$ and since $N^+(u_i) \subseteq N[u_{i+1}]$
    by (P\ref{claim-ind-3}), we get $u_{k+1} \in N(u_{i+1})$ as
    $u_k \neq u_{i+1}$. This shows that (P\ref{claim-ind-4}) holds.

    We now show that (P\ref{claim-ind-6}) holds for $u_{k+1}$. By
    contradiction, assume that $v \in N^+(u_{k+1})$ and that there
    exists $w \in N^+(v) \setminus N^+(u_{k+1})$.
    Since $v \in N^+(u_k)$ and since (P\ref{claim-ind-6}) holds for
    $u_k$, $w \in N^+(v) \subseteq N^+(u_k) \subseteq N[u_{k+1}]$
    where the last inclusion holds by (P\ref{claim-ind-3}). Since
    $u_{k+1} \in N^-(v)$, $w \neq u_{k+1}$ and since
    $w \notin N^+(u_{k+1})$, we have $w \in N^-(u_{k+1})$. But then,
    we obtain a contradiction with D\ref{rule:no-cyclic-triangles}
    applied to the triangle $u_{k+1}vw$. This shows that
    (P\ref{claim-ind-6}) holds, and thus
    $(u_0, u_1, \ldots, u_k, u_{k+1})$ is a sequence of distinct
    vertices satisfying (P\ref{claim-ind-1}), (P\ref{claim-ind-2}),
    (P\ref{claim-ind-3}), (P\ref{claim-ind-4}), (P\ref{claim-ind-5}),
    and (P\ref{claim-ind-6}).

    Consequently, we can construct a a sequence
    $(u=u_0, u_1, \ldots, u_k)$ of distinct vertices satisfying
    (P\ref{claim-ind-1}), (P\ref{claim-ind-2}), (P\ref{claim-ind-3}),
    (P\ref{claim-ind-4}), (P\ref{claim-ind-5}), and
    (P\ref{claim-ind-6}) until we reach a vertex vertex $u_k$
    satisfying (P\ref{claim-ind-7}). Since $G$ has a finite number of
    vertices, we eventually reach such a vertex.
  \end{proof}

  For the remainder of the proof, let $P = (u=u_0, u_1, \ldots, u_k)$
  be the path obtained from \cref{lemma:construction-Puk} satisfying
  (P\ref{claim-ind-1}), (P\ref{claim-ind-2}), (P\ref{claim-ind-3}),
  (P\ref{claim-ind-4}), (P\ref{claim-ind-5}), (P\ref{claim-ind-6}),
  and (P\ref{claim-ind-7}). We first established some properties
  satisfied by the $u_i$s and their neighbors. In the following lemma,
  we show that if there exists a vertex $w$ such that $v$ is the only
  possible choice for $w^+$ (given by
  D\ref{rule:dominating-neighbor}), then $w$ is an outneighbor of all
  $u_i$s.
 
  \begin{lemma}\label{lemma:orientation-wui}
    For any vertex $w \in N^-(v)$, either there exists
    $w^+ \in N^+(w) \setminus \{v\}$ such that
    $N^+(w) \subseteq N[w^+]$, or $w = u_k$, or $w \in N^+(u_i)$ for
    all $0 \leq i \leq k$.
  \end{lemma}

  \begin{proof}
    Consider a vertex $w \in N^-(v)$ such that $N^+(w) \subseteq N[v]$
    and for any $w' \in N^+(w) \setminus \{v\}$,
    $N^+(w) \not\subset N[w']$ (i.e., the vertex $w^+$ given by
    D\ref{rule:dominating-neighbor} is necessarily $v$).  Assume that
    $w \neq u_k$ and observe that by (P\ref{claim-ind-2}), this
    implies that $w \neq u_i$ for any $0 \leq i \leq k$.

    We prove by induction on $i$ that $N^+[w] \subseteq N[u_i]$ for
    any $0 \leq i \leq k$. Observe that since $w \neq u_i$, this
    implies that $w \in N^+(u_i)$ as otherwise, we can set
    $w^+ = u_i$, which contradicts the assumption that necessarily
    $w^+ = v$.  Observe that the inductive property
    holds for $i =0$, since
    $N^+[w] = N^+(w) \cup \{w\} \subseteq N[v] \subseteq N[u_0]$.
    Assume now that $w \in N^+(u_i)$ and $N^+(w) \subseteq N[u_i]$ for
    some $0 \leq i < k$. Note that for any $t \in N^+(w) \subseteq N(u_i)$, by
    D\ref{rule:no-cyclic-triangles} applied to the triangle $u_iwt$,
    we deduce that $t \in N^+(u_i)$. Consequently,
    $N^+[w] \subseteq N^+(u_i) \subseteq N[u_{i+1}]$, where the last
    inclusion holds by (P\ref{claim-ind-3}). This shows that the
    inductive property holds for $i+1$ and this ends the proof of the
    lemma.
  \end{proof}

  We show that if $t$ is a sink in $G\setminus \{v\}$ but not in $G$,
  then $t \in N^+[u_k]$.

  \begin{lemma}\label{lemma:disappearing-sink}
    For any vertex $t \in V$, if $N^+(t) = \{v\}$, then $t = u_k$ or
    $t \in N^+(u_k) \cap N^-(v)$.
  \end{lemma}

  \begin{proof}
    Note that since $N^+(t) = \{v\}$, we have
    $N^-(t) = N(t)\setminus \{v\}$. Since $u_k \in N^+(u_i)$ for all
    $0 \leq i < k$, we have $t \neq u_i$ for all $0 \leq i < k$.  We
    prove by induction on $i$ that $t \in N[u_i]$ for all
    $0 \leq i \leq k$.  For $i = 0$, this holds since
    $t \in N(v) \subseteq N[u_0]$. Suppose that $t \in N[u_i]$ for
    $0 \leq i < k$, and observe that since $N^+(t) = \{v\}$, we have
    $t \in N^+(u_i) \subseteq N[u_{i+1}]$ where the last inclusion
    holds by (P\ref{claim-ind-3}), establishing the induction
    hypothesis for $i+1$. Consequently, $t \in N[u_k]$ and thus,
    either $t = u_k$, or $t \in N^+(u_k) \cap N^-(v)$.
  \end{proof}

  We now show that the inneighbors of $v$ that are outneighbors of
  $u_k$ are also outneighbors of all $u_i$s.
  
  \begin{lemma}\label{lemma:wvmoinsukplus}
    For any vertex $w \in  N^+(u_k) \cap N^-(v)$, we have
    $w \in N^+(u_i)$ for all $0 \leq i \leq k$.
  \end{lemma}

  \begin{proof}
    Observe first that by D\ref{rule:no-cyclic-triangles} applied to
    the triangle $u_kwv$, we have $v \in N^+(u_k)$. Note that
    $w \neq u_i$ for all $0 \leq i < k$, as $w \in N^+(u_k)$ while
    $u_i \in N^-(u_k)$ by (P\ref{claim-ind-4}).  Observe also that for
    any $0 \leq i < k$, if $w \in N[u_i]$, then by
    D\ref{rule:no-cyclic-triangles} applied to the triangle $u_iu_kw$
    we get $w \in N^+(u_i)$, as $u_k \in N^+(u_i)\cap N^-(w)$. We show
    by induction on $i$ that $w \in N[u_i]$ for all
    $0 \leq i \leq k-1$. The property holds for $i=0$ since
    $w \in N(v) \subseteq N[u_0]$. Moreover, if $w \in N[u_i]$ for
    $0 \leq i < k-1$, then $w \in N^+(u_i) \subseteq N[u_{i+1}]$,
    where the inclusion holds by (P\ref{claim-ind-3}). This shows that
    the inductive property holds for $i+1$, and this ends the proof of
    the lemma.
  \end{proof}

  In order to prove \cref{th:certificate-K4-free}, we consider
  different cases.  We first distinguish two cases, depending on the
  orientation of the edge $u_kv$.

  \begin{case}\label{case-vu}
    The edge $u_kv$ is oriented as $\overrightarrow{vu_k}$.
  \end{case}

  \begin{claimproof}
    Let $G' = G \setminus \{v\}$ and
    $\vec{G}' = \vec{G} \setminus \{v\}$. Since
    D\ref{rule:all-edges-directed} and D\ref{rule:no-cyclic-triangles}
    hold in $\vec{G}$, they also hold in $\vec{G}'$. To apply the
    induction hypothesis to $\vec{G}'$, it remains to show that
    D\ref{rule:dominating-neighbor} holds in $\vec{G}'$. By
    D\ref{rule:dominating-neighbor} in $\vec{G}$, for any $w$, there
    exists $w^+ \in N^+(w)$ such that $N^+(w) \subseteq N[w^+]$. If
    there exists such a $w^+$ that is different from $v$, then
    D\ref{rule:dominating-neighbor} also holds at $w$ in
    $\vec{G}'$. Suppose now that there exists a vertex $w$ such that
    $v$ is the only vertex in $N^+(w)$ satisfying
    $N^+(w) \subseteq N[v]$. Note that since $v \in N^-(u_k)$,
    $w \neq u_k$, and thus by \Cref{lemma:orientation-wui},
    $w \in N^+(u_k)$. But this implies that the triangle $wvu_k$ does
    not satisfy D\ref{rule:no-cyclic-triangles}, a
    contradiction. Consequently, D\ref{rule:dominating-neighbor} holds
    in $\vec{G}'$.

    Since
    D\ref{rule:all-edges-directed}--D\ref{rule:no-cyclic-triangles}
    are satisfied in $\vec{G}'$, $\vec{G}'$ contains a unique sink
    $t$. Suppose that $t$ is not a sink in $\vec{G}$ and note that
    this implies that $N^+(t) = \{v\}$ in $\vec{G}$. Since
    $v \notin N^+(u_k)$, this implies that $t \neq u_k$, and thus by
    Lemma~\ref{lemma:disappearing-sink}, $v \in N^+(t)$ and
    $t \in N^+(u_k)$. Since $u_k \in N^+(v)$, the triangle $tvu_k$ is
    a directed triangle, contradicting
    D\ref{rule:no-cyclic-triangles}.
    % Consequently,
    % $t \neq u_k$, and observe that by (P\ref{claim-ind-4}), we also
    % have $t \neq u_i$ for any $0 \leq i < k$, as $u_i$ is not a sink
    % in $G'$ since $u_k \in N^+(u_i)$.  We show by induction on $i$
    % that $t \in N[u_i]$ for any $0 \leq i \leq k$. Note that this
    % implies that $t \in N^+(u_i)$ since $t$ is a sink in
    % $\vec{G}'$. Note that the claim holds for $i = 0$, since
    % $t \in N(v) \subseteq N[u_0]$.  Assume now that $t \in N^+(u_i)$
    % and observe that by (P\ref{claim-ind-3}), we have
    % $t \in N^+(u_i) \subseteq N[u_{i+1}]$. Consequently,
    % $t \in N^+(u_k)$ and thus
    % $\overrightarrow{tv}, \overrightarrow{vu_k},
    % \overrightarrow{u_kt}$ induce a directed cycle in $\vec{G}$,
    % contradicting D\ref{rule:no-cyclic-triangles}. 
    Consequently, $t$ is also a sink in $\vec{G}$. Since all other
    vertices of $\vec{G}'$ have outgoing edges in $\vec{G}'$ (and thus
    in $\vec{G}$), and since $v$ is not a sink in $\vec{G}$ (as
    $u_k \in N^+(v)$), $t$ is the unique sink of $\vec{G}$. This ends
    the proof of \Cref{case-vu}.
  \end{claimproof}

  \begin{case}\label{case-uv}
    The edge $u_kv$ is oriented as  $\overrightarrow{u_kv}$.
  \end{case}
  \begin{claimproof}
    By (P\ref{claim-ind-7}), we know that $N^+(u_k) \subseteq N[v]$.
    We further distinguish two cases, depending on whether
    $N^+(u_k) \cap N^-(v)$ is empty or not.
    
    \begin{subcase}\label{case-uplusvmoinsempty}
      $N^+(u_k) \cap N^-(v) = \emptyset$.
    \end{subcase}
    
    \begin{claimproof}
      We first show that $N^+(u_k) = N^+[v]$. By (P\ref{claim-ind-7})
      and since $v \in N^+(u_k)$, we have $N^+(u_k) \subseteq N[v]$.
      Moreover, since $N^+(u_k)\cap N^-(v) = \emptyset$, this implies
      that $N^+(u_k) \subseteq N^+[v]$. By (P\ref{claim-ind-6}), we
      have $N^+(v) \subseteq N^+(u_k)$. Consequently, we have
      $N^+(u_k) = N^+(v) \cup \{v\}$.
      
      Let $G' = G \setminus \{v\}$ and
      $\vec{G}' = \vec{G} \setminus \{v\}$. Since
      D\ref{rule:all-edges-directed} and
      D\ref{rule:no-cyclic-triangles} hold in $\vec{G}$, they also
      hold in $\vec{G}'$. Moreover, as in \Cref{case-vu}, if
      D\ref{rule:dominating-neighbor} does not hold in $\vec{G}'$,
      then there exists a vertex $w$ such that $v$ is the only vertex
      in $N^+(w)$ satisfying $N^+(w) \subseteq N[v]$. By
      Lemma~\ref{lemma:orientation-wui}, this implies that either
      $w=u_k$ or $w \in N^+(u_k)$. In the second case, this implies
      that $w \in N^-(v) \cap N^+(u_k)$, contradicting our hypothesis.
      We now suppose that $w=u_k$ and we obtain a contradiction by
      showing that D\ref{rule:dominating-neighbor} holds at $u_k$ in
      $\vec{G}'$. Since $N^+(u_k) = N^+[v]$, either
      $N^+(v) = \emptyset$ and $N^+(u_k) = \{v\}$, or there exists
      $v^+ \in N^+(v) \subseteq N^+(u_k)$ such that
      $N^+(v) \subseteq N[v^+]$, implying
      $N^+(u_k) = N^+(v) \cup \{v\} \subseteq N[v^+]$. In either case,
      D\ref{rule:dominating-neighbor} is satisfied for $u_k$ in
      $\vec{G}'$ and thus D\ref{rule:dominating-neighbor} holds in
      $\vec{G}'$.

      By induction hypothesis $\vec{G}'$ contains a unique sink
      $t$. Notice that as $N^+(u_k) = N^+[v]$, $v$ is a sink of
      $\vec{G}$ if and only if $u_k$ is a sink of
      $\vec{G}'$. Therefore, if $t = u_k$, then $t$ is not a sink in
      $\vec{G}$ but $v$ is, and consequently $v$ is the unique sink of
      $\vec{G}$.  Suppose now that $t \neq u_k$.  Since
      $N^+(u_k) \cap N^-(v) = \emptyset$, by the contrapositive of
      Lemma~\ref{lemma:disappearing-sink}, we obtain that $t$ is also
      a sink in $\vec{G}$.  Since $u_k$ is not a sink in $\vec{G'}$,
      $v$ is not a sink in $\vec{G}$ as
      $N^+(v) = N^+(u_k) \setminus \{v\}$ is non-empty. Therefore, $t$
      is the unique sink of $\vec{G}$.
      % is
      % non-empty. Assume by contradiction, that $t$ is not a sink in
      % $\vec{G}$, i.e., that $N^+(t) = \{v\}$ in $G$. 
      % Note that for $0 \leq i < k$, $t \neq u_i$ since $u_k$ is an
      % outneighbor of $u_i$ in $G'$ by (P\ref{claim-ind-4}).  We show
      % by induction on $i$ that $t \in N[u_i]$ for $0\leq i \leq
      % k$. Observe that since $t$ is a sink in $\vec{G}'$ and since
      % $t \neq u_i$, this implies that $t \in N^+(u_i)$. For $i = 0$,
      % the property holds since $t \in N(v) \subseteq N[u_0]$. Suppose
      % that the property holds for $0 \leq i <k$, and note that this
      % implies that $t \in N^+(u_i) \subseteq N[u_{i+1}]$ by
      % (P\ref{claim-ind-3}), proving that the property holds for
      % $i+1$. Consequently, the property holds for $k$, and
      % $t \in N^+(u_k) \cap N^-(v)$, contradicting our hypothesis. This
      % shows that if $t \neq u_k$, then $t$ is the unique sink of
      % $\vec{G}$.
%
      In any case, $\vec{G}$ contains a unique sink concluding the
      proof of \Cref{case-uplusvmoinsempty}.
    \end{claimproof}
  
    \begin{subcase}\label{case-uplusvmoinsnotempty}
      $N^+(u_k) \cap N^-(v) \neq \emptyset$.
    \end{subcase}
    \begin{claimproof}
      In this case, we consider the configuration $\vec{G}^*$ obtained
      from $\vec{G}$ where for each $w \in N^+(u_k) \cap N^-(v)$ in
      $\vec{G}$, we have replaced the arc $\overrightarrow{u_kw}$ by
      the arc $\overrightarrow{wu_k}$.  We show that
      D\ref{rule:all-edges-directed}--D\ref{rule:no-cyclic-triangles}
      are satisfied in $\vec{G}^*$. Observe that since all edges are
      directed in $\vec{G}$, they are all directed in $\vec{G}^*$ and
      D\ref{rule:all-edges-directed} holds.
        
      Suppose that there exists a directed triangle in
      $\vec{G}^*$. Then necessarily, this triangle contains an edge
      $\overrightarrow{wu_k}$ with $w \in N^+(u_k) \cap N^-(v)$ in
      $\vec{G}$. Let $x$ be the vertex such that
      $\overrightarrow{wu_k}$, $\overrightarrow{u_kx}$,
      $\overrightarrow{xw}$ form a triangle in $\vec{G}^*$. Note that
      $x \neq v$ as $w \in N^-(v) \cap N^+(x)$ in $\vec{G}$ and
      $\vec{G}^*$. Since $\overrightarrow{u_kx}$ is in $\vec{G}^*$,
      $x \in N^+(u_k)$ in $\vec{G}^*$ and in $\vec{G}$ (the set of
      outneighbors of $u_k$ in $\vec{G}^*$ is a subset of its set of
      outneighbors in $\vec{G}$).  Since in $\vec{G}$,
      $N^+(u_k) \subseteq N[v]$ by (P\ref{claim-ind-7}), $x$ is a
      neighbor of $v$. Since $w \in N^+(x)$ and $v \in N^+(w)$, by
      D\ref{rule:no-cyclic-triangles} applied to the triangle $xwv$,
      we have $x \in N^-(v)$ in $\vec{G}$. Consequently, in $\vec{G}$,
      $x \in N^+(u_k) \cap N^-(v)$, and thus in $\vec{G}^*$,
      $x \in N^-(u_k)$, a contradiction. This shows that
      D\ref{rule:no-cyclic-triangles} holds in $\vec{G}^*$.
      
      We now show that D\ref{rule:dominating-neighbor} holds in
      $\vec{G}^*$.  For any
      $x \notin \left(N^+(u_k) \cap N^-(v)\right)\cup \{u_k\}$, $x$ has the same
      set of outneighbors in $\vec{G}$ and in $\vec{G}^*$, and thus
      D\ref{rule:dominating-neighbor} holds for $x$ in $\vec{G}^*$ as
      it holds in $\vec{G}$. Since $N^+(u_k) \subseteq N[v]$ in
      $\vec{G}$ by (P\ref{claim-ind-7}), since $N^+(u_k)$ in
      $\vec{G}^*$ is a subset of $N^+(u_k)$ in $\vec{G}$, and since
      $v \in N^+(u_k)$ in $\vec{G}^*$, D\ref{rule:dominating-neighbor}
      also holds for $u_k$ in $\vec{G}^*$ (with $u_k^+ = v$).

      Consider now $w \in N^+(u_k) \cap N^-(v)$ in $\vec{G}$ and let
      $w^+ \in N^+(w)$ such that $N^+(w) \subseteq N[w^+]$ in
      $\vec{G}$ (note that it is possible to have $w^+ =v$). We show
      that $w^+$ is a neighbor of $u_k$ in $G$ and thus that
      D\ref{rule:dominating-neighbor} is still satisfied at $w$ in
      $\vec{G}'$ with the same $w^+$.  By
      Lemma~\ref{lemma:wvmoinsukplus}, for all $0 \leq i \leq k$, we
      have $w \in N^+(u_i)$ and thus $w^+ \neq u_i$.  Note that for
      any $0 \leq i \leq k$, if $w^+ \in N[u_i]$, then by
      D\ref{rule:no-cyclic-triangles} applied to the triangle
      $u_iww^+$, we get $w^+ \in N^+(u_i)$.  We show by induction on
      $i$ that $w^+ \in N[u_i]$ for all $0 \leq i \leq k$. Since
      $v \in N^+(w) \subseteq N[w^+]$, the property holds for $i = 0$
      as $w^+ \in N[v] \subseteq N[u_0]$. Suppose now that
      $w^+ \in N[u_i]$ for some $0 \leq i < k$. This implies that
      $w^+ \in N^+(u_i) \subseteq N[u_{i+1}]$ where the inclusion
      holds by (P\ref{claim-ind-3}). Consequently, $w^+$ is a
      neighbor of $u_k$. Therefore since $u_k$ is the only outneighbor
      of $w$ in $\vec{G}^*$ that is not an outneighbor of $w$ in
      $\vec{G}$, we obtain that $N^+(w) \subseteq N[w^+]$ is also true
      in $\vec{G}^*$.  Therefore, D\ref{rule:dominating-neighbor} is
      satisfied at $w$ in $\vec{G}^*$. So
      D\ref{rule:dominating-neighbor} holds in $\vec{G}^*$.
        
      Consequently, by \Cref{case-uplusvmoinsempty} applied to $\vec{G}^*$,
      we deduce that there exists a unique sink $t$ in
      $\vec{G}^*$. Note that
      $t \notin (N^+(u_k) \cap N^-(v))\cup \{u_k\}$ as every vertex
      $w \in (N^+(u_k) \cap N^-(v))\cup \{u_k\}$ has $v$ as an
      outneighbor in $\vec{G}$ and in $\vec{G}^*$. So, $t$ is a sink
      in $\vec{G}$ and since no node of
      $(N^+(u_k) \cap N^-(v))\cup \{u_k\}$ is a sink in $\vec{G}$, $t$ is
      the unique sink of $\vec{G}$.
      
      This ends the proof of \Cref{case-uplusvmoinsnotempty}.
    \end{claimproof}

    This ends the proof of \Cref{case-uv}.
  \end{claimproof}
  
  Theorem \ref{th:certificate-K4-free} follows from the analysis in
  Cases \ref{case-vu} and \ref{case-uv}.
\end{proof}

\subsection{Spanning Tree}\label{sec:st-dism}
We present a labeling constructing a spanning tree in any dismantlable graph $G = (V,E)$, with inputs. The input function assigns each $v \in V$ an input in $I = \{0,1\}$, with exactly one node receiving 1 (the root $r$) and all others receiving 0. We define $L = \{0,1,2\}$ to be the set of labels and each node is assigned a value from $L$ by $x_L : V \rightarrow L$. The following must hold for every node $v$ of $G$ with label $l$.

\begin{enumerate}[N1]
    \item\label{rule:at-least-one-smaller} Node $v \neq r$ has at least one neighbor with label $(l - 1) \mod 3$.
    \item\label{rule:bigger-smaller-not-touching} Node $v$ is not in any three-node triangle labeled $\{0,1,2\}$.
\end{enumerate}

We show that if
N\ref{rule:at-least-one-smaller}--N\ref{rule:bigger-smaller-not-touching}
are satisfied for every node in a dismantlable graph $G$, then adding
an edge from any node $v$ distinct from the root with label $l$ to any
neighbor of $v$ with label $l-1 \mod 3$, we obtain a spanning tree.

We first prove the Existence property.

\begin{lemma}[Existence]\label{lem:labeling-exists}
  In any graph $G$ there exists a labeling that satisfies
  N\ref{rule:at-least-one-smaller}--N\ref{rule:bigger-smaller-not-touching}
  for every node in $G$.
\end{lemma}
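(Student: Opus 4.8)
The plan is to build the certificate directly from a breadth-first layering rooted at $r$. Concretely, set $x_C(v) = d_G(r,v) \bmod 3$ for every vertex $v$, where $d_G$ denotes graph distance; this is well defined because $G$ is connected (every dismantlable graph is connected). It then remains only to verify that N\ref{rule:at-least-one-smaller} and N\ref{rule:bigger-smaller-not-touching} hold at every vertex, which I would do in the two steps below.

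For N\ref{rule:at-least-one-smaller}: take any $v \neq r$ and let $i = d_G(r,v) \bmod 3$. Since $v \neq r$ we have $d_G(r,v) \geq 1$, so $v$ lies on a shortest $r$–$v$ path and hence has a neighbor $w$ with $d_G(r,w) = d_G(r,v) - 1$. Then $x_C(w) = (d_G(r,v) - 1) \bmod 3 = (i-1) \bmod 3$, which is exactly what N\ref{rule:at-least-one-smaller} demands.

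For N\ref{rule:bigger-smaller-not-touching}: the key (elementary) fact is that two adjacent vertices have distances to $r$ differing by at most one. Hence, for any triangle on vertices $a,b,c$, the integers $d_G(r,a), d_G(r,b), d_G(r,c)$ pairwise differ by at most one, so all three lie in a window $\{d, d+1\}$ for some integer $d$ (three integers that are pairwise within distance one must share a common consecutive pair). Therefore $\{x_C(a), x_C(b), x_C(c)\} \subseteq \{d \bmod 3, (d+1) \bmod 3\}$, a set of size at most two, so it cannot equal $\{0,1,2\}$; thus no triangle is labeled $\{0,1,2\}$ and N\ref{rule:bigger-smaller-not-touching} holds everywhere.

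There is essentially no serious obstacle: the only thing to get right is the distance-monotonicity observation and the remark that it confines each triangle to at most two residues mod $3$ — which is precisely why the modulus is $3$ and not $2$. Assembling the two verifications above then completes the proof of \Cref{lem:labeling-exists}.
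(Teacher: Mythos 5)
Your proposal is correct and follows essentially the same route as the paper: assign $d_G(r,v) \bmod 3$, use a shortest-path neighbor for N\ref{rule:at-least-one-smaller}, and bound the distances within any triangle to two consecutive values for N\ref{rule:bigger-smaller-not-touching}. No issues.
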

\begin{proof}
  We write $d(v,r)$ to denote the distance between a node $v$ and the
  root $r$. We construct the assignment as follows. For each node $v$
  at distance $k > 1$ from the root in $G$, assign $d(v,r) \mod 3$ to
  $v$. This satisfies N\ref{rule:at-least-one-smaller}. Consider a
  triangle $uvw$, assume, without loss of generality, that that $v$ is
  the closest vertex to $r$ among $u,v,w$, and let $k = d(v,r)$. Then
  $k \leq d(u,r) \leq k+1$ and $k \leq d(w,r) \leq k+1$, and thus
  N\ref{rule:bigger-smaller-not-touching} is satisfied. We have thus
  established the existence of an assignment that satisfies
  N\ref{rule:at-least-one-smaller}--N\ref{rule:bigger-smaller-not-touching}.
    %If $d(u, r) = k+1$ and $d(w, r) = k-1$, we can say that the value of $v$ is 1, the value of $u$ is 2 and the value of $w$ is 0, without loss of generality. Since $u$ and $w$ are neighbors, their distances from $r$ cannot differ by 2, so this case is impossible, ensuring N\ref{rule:bigger-smaller-not-touching} is satisfied. Hence an assignment that satisfies all rules exists.
\end{proof}

%Let $\vec{G}$ be the directed configuration of $G$ that contains all nodes and a directed edge from each node with value $i$ to a neighboring node with value $i-1 \mod 3$.

Given a graph $G=(V,E)$ and a labeling $x_L: V \to \{0,1,2\}$, let
$\vec{G}$ be the directed graph defined by $V(\vec{G}) = V$ and
$E(\vec{G}) = \{\overrightarrow{vw} : vw \in E \text{ and } x_L(w) = x_L(v)-1
\mod 3 \}$.

\begin{lemma}\label{lem:acyclic}
  If
  N\ref{rule:at-least-one-smaller}--N\ref{rule:bigger-smaller-not-touching}
  are satisfied for every node in a dismantlable graph $G$, then
  $\vec{G}$ does not contain directed cycles.
\end{lemma}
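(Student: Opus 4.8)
My plan is to argue by contradiction, exactly mirroring the structure of the proof of \cref{lem:chordal-acyclic}. Suppose $\vec{G}$ contains an oriented cycle, and among all oriented cycles pick a shortest one, say $\mathsf{C} = (u_0, u_1, \ldots, u_{k-1}, u_0)$, where each arc $\overrightarrow{u_j u_{j+1}}$ records a drop of the value by $1 \bmod 3$. First I would observe that $k \neq 3$: an oriented $3$-cycle would be a triangle on which the values are $i, i-1, i-2 \bmod 3$, i.e.\ exactly the three residues $\{0,1,2\}$, which is forbidden by N\ref{rule:bigger-smaller-not-touching}. Also $k \neq 2$ (a $2$-cycle is just one edge, and an edge cannot be oriented both ways in $\vec G$ since the two endpoints have distinct fixed values). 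So $k > 3$.

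Now the key step, following \cref{lem:chordal-acyclic}: I want to produce a chord of $\mathsf{C}$ that splits it into two smaller closed oriented walks, at least one of which has length $\geq 2$ and is an honest oriented cycle, contradicting minimality. The subtlety here — and this is where the argument differs from the chordal case — is that $K_4$-free dismantlable graphs need not be chordal, so $\mathsf{C}$ need not have a chord at all. However, the cycle $\mathsf{C}$ is an oriented cycle in $\vec G$, which means the value labels around it go $a_0, a_0-1, a_0-2, \ldots \pmod 3$, so after $k$ steps they return to $a_0$; hence $k \equiv 0 \pmod 3$, in particular $k \geq 6$. I would then walk around $\mathsf{C}$ three vertices at a time: the vertices $u_0, u_3, u_6, \ldots$ all carry the \emph{same} value. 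If any two consecutive-in-this-subsequence vertices $u_{3j}$ and $u_{3(j+1)}$ (which are at distance $3$ along $\mathsf{C}$) were adjacent in $G$, the path $u_{3j}, u_{3j+1}, u_{3j+2}, u_{3(j+1)}$ together with the edge $u_{3(j+1)} u_{3j}$ would be a $4$-cycle in $G$ whose value sequence is $a, a-1, a-2, a$; the edge between $u_{3j}$ and $u_{3(j+1)}$ joins two equal-valued vertices, so it is not an arc of $\vec G$. That chord is not directly useful.

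So instead I would look for a chord between $u_{j}$ and $u_{j+2}$ for some $j$ — these have values $a$ and $a-2 \equiv a+1 \pmod 3$, so such an edge, if present, \emph{is} an arc $\overrightarrow{u_{j+2} u_j}$ of $\vec G$, and it splits $\mathsf{C}$ into an oriented triangle $(u_j, u_{j+1}, u_{j+2}, u_j)$ — forbidden by N\ref{rule:bigger-smaller-not-touching} — so this chord cannot exist either. The cleanest route, which I expect to be the actual intended one, is to not rely on chords of $G$ at all but to use dismantlability directly, again paralleling the \emph{global} induction in \cref{lem:chordal-unique-sink} rather than the cycle-surgery of \cref{lem:chordal-acyclic}. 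Concretely: induct on $|V(G)|$; let $v$ be dominated by $u$; apply the induction hypothesis to $G \setminus \{v\}$ (noting N\ref{rule:at-least-one-smaller}--N\ref{rule:bigger-smaller-not-touching} persist there for every remaining node, which needs a short check using that $u$ picks up $v$'s constraints) to get that $\vec G \setminus\{v\}$ is acyclic; then show any oriented cycle of $\vec G$ through $v$ can be rerouted through $u$: if $\overrightarrow{xv}, \overrightarrow{vy}$ are the cycle arcs at $v$, then $x,y \in N(v) \subseteq N[u]$, so $x,y \in N[u]$, and I replace the sub-path $x,v,y$ by $x,u,y$ — but one must verify the new two arcs have the right value drops, i.e.\ $x$ has value $(\text{val }u)+1$ and $y$ has value $(\text{val }u)-1$; this is forced because $x$ has value $(\text{val }v)+1 = \text{val }u +1$ (since $v$ is dominated by $u$, must it have the same value? not necessarily — and this is the main obstacle) — so I will in fact need to handle the cases $\overrightarrow{uv}$ vs $\overrightarrow{vu}$ and $\text{val }u = \text{val }v \pm 1$ separately, closing the cycle through $u$ (possibly also through $v$'s other neighbours) and invoking N\ref{rule:bigger-smaller-not-touching} to kill the triangle $u,v$ plus the re-routed vertex. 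The main obstacle is exactly this bookkeeping: ensuring the re-routed closed walk through $u$ is genuinely an oriented cycle (nonempty, no repeated vertex) and not accidentally a forbidden triangle or a trivial back-and-forth; I expect this to be the longest part of the write-up and the place where the $K_4$-freeness and N\ref{rule:bigger-smaller-not-touching} get used together.
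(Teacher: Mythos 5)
Your final plan (induct on $|V(G)|$, delete a dominated vertex $v$, and reroute any oriented cycle through its dominator $u$) points in a workable direction, but the step you yourself flag as ``the main obstacle'' is precisely the missing key idea, and it is not optional bookkeeping: you must show that $u$ carries the \emph{same} value as $v$. This is forced by N\ref{rule:bigger-smaller-not-touching} applied to the two triangles $\{u,v,w\}$ and $\{u,v,z\}$, where $w,z$ are the cycle-neighbours of $v$: they have values $i+1$ and $i-1 \pmod 3$ (with $i$ the value of $v$) and both lie in $N[u]$ because $u$ dominates $v$, so if $u$ had either of the other two values one of these triangles would carry all three labels $\{0,1,2\}$. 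Once the values of $u$ and $v$ are equal, the rerouted arcs $\overrightarrow{wu}$ and $\overrightarrow{uz}$ are genuine arcs of $\vec{G}$, and your proposed case split on ``$\overrightarrow{uv}$ vs.\ $\overrightarrow{vu}$'' is moot, since an edge between equal-valued vertices is not an arc of $\vec{G}$ at all. Note also that $K_4$-freeness is nowhere needed (the lemma holds for all dismantlable graphs, and the paper's proof never invokes it); your expectation that it enters here is a sign that the actual mechanism had not been found.

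Beyond that, your inductive framing carries obligations you only gesture at and which the paper avoids entirely. You must verify that N\ref{rule:at-least-one-smaller} survives the deletion of $v$ (a node whose only lower-valued neighbour was $v$ needs $u$ as a replacement witness, which again requires the same value-forcing argument and fails outright if $v = r$, so you would also have to argue that a dominated vertex distinct from the root exists and that $G \setminus \{v\}$ remains dismantlable). The paper sidesteps all of this by never deleting anything: among all oriented cycles of $\vec{G}$ it chooses one minimizing the sum of the positions of its vertices in a dismantling order, takes $v$ to be the cycle vertex of largest position and $u$ a vertex dominating it (so $u$ has smaller position and is adjacent to $w$ and $z$), forces the value of $u$ to equal that of $v$ via N\ref{rule:bigger-smaller-not-touching} as above, and then replaces $v$ by $u$ --- shortcutting the cycle if $u$ already lies on it --- to obtain an oriented cycle of strictly smaller weight, a contradiction. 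The ``is the rerouted walk really a cycle'' case split ($u$ on or off $\mathsf{C}$) is exactly the part you mention but leave unresolved; as written, your proposal does not yet constitute a proof.
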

\begin{proof}
  Let $G = (V,E)$ be a dismantlable graph and $x_L: V \to \{0,1,2\}$
  be a labeling of $G$ such that
  N\ref{rule:at-least-one-smaller}--N\ref{rule:bigger-smaller-not-touching}
  hold at every $v \in V$.  Consider a dismantling order
  $u_1, \ldots, u_n$ of $V$, and let the \emph{order} of a vertex $u$,
  be its position in the dismantling order. For any cycle
  $\sC = (v_0,v_1, \ldots, v_k)$ of $G$, the \emph{sum of orders} of
  $\sC$ is the sum of the orders of the vertices of $\sC$. By
  contradiction, assume that $\vec{G}$ contains a directed cycle, and
  among all directed cycles of $\vec{G}$, let
  $\sC = (v_0,v_1,\ldots,v_k)$ be a cycle that has a minimum sum of
  orders.
  % By N\ref{rule:bigger-smaller-not-touching}, we know that
  % $k \geq 3$.
  Note that for any $0 \leq i \leq k$, we have
  $x_L(v_{i+1}) = x_L(v_i) -1 \mod 3$ (with the convention that
  $v_{k+1}= v_0$).  Without loss of generality, assume that $v_1$ is
  the node with the maximum order in $\sC$ and that $x_L(v_1) =
  1$. Let $u_1$ be a node dominating $v_1$ in the dismantling order of
  $G$. Thus, the order of $u_1$ is smaller than the order of
  $v_1$. Since $v_0 \in N(v_1) \subseteq N[u_1]$, either $u_1 = v_0$
  and $x_L(u_1) = x_L(v_0) = x_L(v_1) + 1 \mod 3 = 2$, or
  $x_L(u_1) \in \{1,2\}$ by N\ref{rule:bigger-smaller-not-touching}
  applied to the triangle $v_0v_1u_1$. Similarly, since
  $v_2 \in N(v_1) \subseteq N[u_1]$, we obtain that
  $x_L(u_1) \in \{0,1\}$. Consequently, $x_L(u_1) = 1$,
  $u_1 \notin \{v_0,v_2\}$, and $\overrightarrow{v_0u_1}$ and
  $\overrightarrow{u_1v_2}$ are arcs of $\vec{G}$. If
  $u_1 \notin \sC$, then $\sC' = (v_0,u_1,v_2, \ldots, v_k)$ is a
  directed cycle of $\vec{G}$ that has a smaller sum of orders than
  $\sC$, a contradiction. Suppose now that $u_1 = v_i$ with
  $3 \leq i \leq k$ and observe that the cycle
  $\sC'' = (v_0,u_1=v_i,v_{i+1}, \ldots, v_k)$ is directed cycle of
  $\vec{G}$ that has a smaller sum of orders than $\sC$. This shows
  that $\sC$ does not exist and ends the proof of the lemma.
\end{proof}

Let $T$ be a spanning subgraph of $\vec{G}$ in which each node $v$,
except the root, arbitrarily selects as a parent a neighbor
$w \in N(v)$ such that $x_L(w) = x_L(v) - 1 \mod 3$. We show that $T$
is a tree. Since $\vec{G}$ is acyclic by Lemma \ref{lem:acyclic}, $T$
is also acyclic. We  prove that $T$ is connected.

\begin{lemma}\label{lem:connected}
    If N\ref{rule:at-least-one-smaller}--N\ref{rule:bigger-smaller-not-touching} are satisfied for every node in a dismantlable graph, $T$ is connected. 
\end{lemma}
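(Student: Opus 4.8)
The plan is to show that every node $v$ can reach the root $r$ by following parent pointers in $T$, which (together with acyclicity from Lemma~\ref{lem:acyclic}) gives connectedness. The natural induction is on the dismantling order, mirroring the structure of the existence proof and of Theorem~\ref{th:certificate-K4-free}: I would process the nodes $v_1,\dots,v_n$ and show that $T$ restricted to $G[\{v_1,\dots,v_i\}]$ is connected — but the subtlety is that parent pointers may leave this subgraph, so a cleaner formulation is needed. Instead I would argue directly: take any node $v\neq r$ and consider the directed path in $T$ obtained by repeatedly moving to the parent. By N\ref{rule:at-least-one-smaller} every non-root node has a parent, so this path can only terminate at $r$; by Lemma~\ref{lem:acyclic} it cannot cycle; hence it must be finite and end at $r$. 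The only thing to rule out is an infinite non-repeating path, which is impossible in a finite graph. So in fact connectedness of $T$ follows almost immediately once we know $\vec{G}$ is acyclic and every non-root node has out-degree at least one in the parent relation — the real content is already in Lemmas~\ref{lem:acyclic} and the fact that N\ref{rule:at-least-one-smaller} guarantees a parent.

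Let me make this precise. First I would note that since $G$ is dismantlable it is connected, so $d(v,r)$ is finite for every $v$; however we do \emph{not} know a priori that $x_C(v)$ equals $d(v,r)\bmod 3$ — the certificate assignment is adversarial, subject only to N\ref{rule:at-least-one-smaller}--N\ref{rule:bigger-smaller-not-touching}. So the distance-based argument of Lemma~\ref{lem:labeling-exists} is not available here, and I must work purely with the local rules. Starting from $v\neq r$, rule N\ref{rule:at-least-one-smaller} gives a neighbor $v_1$ with $x_C(v_1)=x_C(v)-1\bmod 3$, i.e. an edge of $\vec G$ directed $\overrightarrow{vv_1}$, which is a legal parent choice. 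If $v_1\neq r$, repeat to get $v_2$, and so on. This produces a walk $v=u_0, u_1, u_2,\dots$ following edges of $\vec G$. Since $V$ is finite, if this walk never hits $r$ it must revisit a vertex, producing a directed cycle in $\vec G$, contradicting Lemma~\ref{lem:acyclic}. Hence the walk reaches $r$, so $v$ is connected to $r$ in $\vec G$ along edges that are all valid parent edges.

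The one gap to close carefully is that the walk I just built uses \emph{some} choice of parent at each step, whereas $T$ is defined by each node fixing \emph{one arbitrary} parent in advance. These need not agree. To fix this I would instead argue: let $P(v)$ denote the chosen parent of $v$ in $T$ for $v\neq r$, and set $P(r)=r$. Consider the sequence $v, P(v), P(P(v)), \dots$. Every term is defined (N\ref{rule:at-least-one-smaller}), each step follows an edge of $\vec G$, and by finiteness the sequence is eventually periodic; a nontrivial period is a directed cycle in $\vec G$, impossible by Lemma~\ref{lem:acyclic}, so the only possible cycle is the fixed point $r$. Therefore iterating $P$ from any $v$ reaches $r$, exhibiting a path in $T$ from $v$ to $r$; thus $T$ is connected. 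I expect the main (mild) obstacle to be exactly this point — phrasing the argument in terms of the fixed parent function $P$ rather than an ad hoc greedy walk, and invoking Lemma~\ref{lem:acyclic} correctly — but no case analysis on the dismantling order is actually needed for this lemma, in contrast to the earlier proofs.
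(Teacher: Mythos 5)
Your argument is correct and is essentially the paper's own proof: follow the chosen parent pointers from an arbitrary node, use Lemma~\ref{lem:acyclic} to rule out cycles, and use N\ref{rule:at-least-one-smaller} to conclude the path can only terminate at the root, hence $T$ is connected. Your extra care in phrasing the walk via the fixed parent function rather than an ad hoc greedy walk is a minor refinement of the same idea, not a different route.
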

\begin{proof} 
    %Take two nodes $u$ and $v$ in $T$. Since from Lemma \ref{lem:acyclic}, $\vec{G}$ and as a result $T$ does not contain cycles, the directed path starting at $u$ (resp. at $v$) must end at a node without a successor in $T$. From N\ref{rule:at-least-one-smaller}, the root is the only node without a smaller neighbor. Hence, in any case nodes $u$ and $v$ must each be in a path that reaches a common node in $T$, in particular the root, and as a result $T$ is connected.
  Let $u$ be a node in $T$. Since from Lemma \ref{lem:acyclic},
  $\vec{G}$ and as a result $T$ does not contain cycles, the directed
  path that starts at $u$ and follows the selected arcs must end at a
  node without a successor in $T$. From
  N\ref{rule:at-least-one-smaller}, the root is the only node without
  a successor. Hence, any node $u$ must be in a path that reaches the
  root, and as a result $T$ is connected.
\end{proof}

Since every node in $G$ has a label and must be in $T$, from Lemmas \ref{lem:acyclic} and \ref{lem:connected} we obtain:

\begin{theorem}[Soundness]\label{th:spanning-tree}
    In a dismantlable graph, if N\ref{rule:at-least-one-smaller}--N\ref{rule:bigger-smaller-not-touching} are satisfied for every node, $T$ is a spanning tree.  
\end{theorem}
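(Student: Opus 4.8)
The plan is essentially to assemble the three lemmas already established into a clean argument. Recall that $T$ is the spanning subgraph of $\vec G$ in which every node $v \neq r$ picks exactly one parent among its neighbours of value $(\mathrm{value}(v)-1) \bmod 3$; such a neighbour exists by N\ref{rule:at-least-one-smaller}, so $T$ is well defined and spans $V(G)$ (every node is included, since every node carries a label and hence a certificate value). A graph on $n$ vertices is a tree if and only if it is connected and acyclic (equivalently, connected with $n-1$ edges). So it suffices to verify these two properties for $T$.

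First I would observe that $T$ is acyclic. Since $T$ is a subgraph of $\vec G$ and Lemma~\ref{lem:acyclic} tells us $\vec G$ contains no directed cycle, $T$ contains no directed cycle either; and because each edge of $T$ is oriented from a node of value $i$ to a node of value $i-1 \bmod 3$, any (undirected) cycle in $T$ would have to be a directed cycle — its edge orientations are forced to be consistent around the cycle, exactly as in the parity argument used inside the proof of Lemma~\ref{lem:acyclic}. Hence $T$ has no cycle at all. Second, $T$ is connected: this is precisely Lemma~\ref{lem:connected}, whose proof I would simply invoke — starting at any node $u$, follow parent pointers; since $T$ is acyclic the resulting path cannot revisit a vertex and therefore terminates at a node with no outgoing edge in $T$, which by N\ref{rule:at-least-one-smaller} can only be the root $r$. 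Thus every vertex is joined to $r$ in $T$, so $T$ is connected.

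Putting these together: $T$ is a connected, acyclic, spanning subgraph of $G$, i.e.\ a spanning tree. I expect no real obstacle here — the only point requiring a sentence of care is making explicit that an undirected cycle in $T$ is necessarily a directed one (so that Lemma~\ref{lem:acyclic} applies), which follows from the value-labelling structure of the edges of $\vec G$; alternatively one can argue purely by counting, noting that $T$ is connected (Lemma~\ref{lem:connected}) and has exactly $|V|-1$ edges since each non-root vertex contributes exactly one parent edge, which already forces $T$ to be a tree.
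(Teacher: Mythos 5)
Your proposal is correct and follows essentially the same route as the paper: the theorem is obtained directly by combining Lemma~\ref{lem:acyclic} (acyclicity of $\vec G$, hence of $T$) with Lemma~\ref{lem:connected} (connectivity of $T$) and the fact that every labelled node belongs to $T$. Your extra remark that any undirected cycle in $T$ would be forced to be directed (or, alternatively, the $|V|-1$ edge-count argument) is a welcome bit of care that the paper leaves implicit, but it does not change the approach.
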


\subsection{Allowing arbitrary sinks in dismantlable graphs}
In \cref{sec:elec-dism} we gave a labeling scheme that implies a unique sink defined to be a leader, with labels encoding the orientation of all edges incident to every node. % A natural direction for future research is extending this work to a verification scheme for leader election in $K_4$-free dismantlable graphs.
  Observe that with our labeling schemes for leader election in
  chordal graphs and in trees, any node in the graph can be the sink
  and thus designated as a leader. Indeed, any chordal graph $G$
  contains at least two simplicial vertices, and thus for any vertex
  $v$, there exists a simplicial order that ends at $v$. By
  \Cref{obs:chordal-labeling-exists}, there exists then an orientation
  of $\vec{G}$ satisfying C\ref{rule:chordal-all-edges-directed}--C\ref{rule:chordal-no-cyclic-triangles}
  where $v$ is the unique sink of $\vec{G}$.
  This property is however no longer true for dismantlable
  graphs. Consider for example the graph $G$ of Figure
  \ref{fig:dismantlable} and observe that $a$ is the only dominated
  vertex of $G$. This implies that $a$ cannot be the sink of any
  orientation satisfying
  D\ref{rule:all-edges-directed}--D\ref{rule:no-cyclic-triangles}
  obtained by \Cref{obs:dismantlable-labeling-exists}. In fact, one
  can show that in any orientation $\vec{G}$ of $G$ satisfying
  D\ref{rule:all-edges-directed}--D\ref{rule:no-cyclic-triangles}, $a$
  is not a sink. Indeed, suppose there exists such an
  orientation. From D\ref{rule:dominating-neighbor} at $b$ and $c$,
  the edges $bd$ and $ce$ should be oriented as $\overrightarrow{db}$
  and $\overrightarrow{ec}$. Consider now the edge $de$ and assume
  without loss of generality that it is oriented as
  $\overrightarrow{de}$. Then $d$ has two non-adjacent outgoing
  neighbors $b$ and $e$ that do not have a common neighbor, and thus
  D\ref{rule:dominating-neighbor} does not hold at $d$.

  However, by combining our labeling schemes, we can easily build a
  new labeling scheme that allows to elect any node in a dismantlable
  graph $G$. First, we use the labeling scheme from
  Section~\ref{sec:elec-dism} to mark a root $r$ in $G$. Then, we use the
  labeling scheme from Section~\ref{sec:st-dism} to build a spanning
  tree $T$ of $G$ rooted at $r$. Finally, considering only the edges
  of $T$ as unoriented edges (and forgetting the root $r$), we use the
  labeling scheme from Section~\ref{sec:trees} to elect a leader in
  $T$. As for any vertex $v$ and any spanning tree $T$ of $G$, there
  exists a valid orientation $\vec{T}$ of $T$ where $v$ is the unique
  sink of $\vec{T}$, we deduce that we can construct a locally
  checkable labeling solving leader election in $G$, where for each
  node $v$, there is a valid labeling where $v$ is the designated
  leader.

  \section{From Labeling to Self--Stabilization} \label{sec:gf}
A distributed self-stabilizing algorithm can autonomously recover from transient fault and is usually defined as a set of rules evaluated at every node. Each rule is of the form:
$$\texttt{name-of-rule}: \langle guard \rangle \longrightarrow command$$

where \emph{guard} is a boolean predicate over the variables of the closed neighborhood of the node and the \emph{command} specifies how the node updates its variables. Here, we assume that each node only updates its state variable. A self-stabilizing algorithm is said to be \emph{silent} if after some finite time no node updates its variables anymore. When activated, each node can perform computations based on its state and the states of its neighbors and update its own state. The order of node activations is determined by a \emph{scheduler}. A scheduler is characterized both by synchronicity and fairness. In particular, here we consider an \emph{asynchronous} scheduler meaning that at each step, an arbitrary (non-empty) subset of nodes may be activated. In terms of fairness, we assume a \emph{Gouda fair} scheduler~\cite{Gouda2001theory}. According to \cite{dubois2011taxonomy}, Gouda fairness is the strongest fairness assumption. It requires that for any configuration that appears infinitely often during the execution of an algorithm, every configuration that is reachable by executing any non-empty subset of enabled commands (i.e., commands for which the guard is true) also appears infinitely often. We assume that the graphs we consider are finite. In this setting, Gouda fairness ensures that any configuration that is infinitely often reachable is eventually reached. In this section we show how to obtain a silent self--stabilizing algorithm that works under a Gouda fair scheduler, given a labeling scheme of finite type. 

We say that under a given algorithm $A$, a label assignment is \emph{final} if no node satisfies any of the guards in $A$. We define a correct final label assignment to be one that is both final with respect to an algorithm $A$ and correct with respect to a labeling scheme. Call $L$ the set of states in a given labeling. Informally, every node that detects a constraint of the labeling scheme being violated sets its state to a \textit{reset} state that we denote $\bot$. Every node that has at least one neighbor in a reset state, also changes its state to the \textit{reset} state. Hence, an error detection is propagated throughout the graph from node to neighboring node. When a node that is at the \textit{reset} state is activated, it non-deterministically switches to a state from $L$. Gouda fairness ensures that the graph eventually reaches a correct global configuration. 

A node executing Algorithm~\ref{algo:gcssa} evaluates the guards in order. If the first guard is enabled, the corresponding command is performed and the second guard is not evaluated. 

Given a labeling scheme $(L,R)$, a graph $G$ and a vertex
$v \in G$, a \emph{possible state} for $v$ in $L$ is a state $s$ from $L$
such that there exists a 1-ball $(B,x) \in R$ where the center of $B$
is labeled by $x(v) = s$ and $B$ is isomorphic to $B_1(v)$. If $(L,R)$
is a labeling scheme of finite type, then the set of possible states for each node in any graph $G$ is finite.

\begin{algorithm}[htpb]  
    \SetKwInput{input}{Input}  
    \caption{General Labeling to Self-Stabilizing Algorithm}    
    \label{algo:gcssa}
    {\small \input{The labeling scheme $(L,R)$ and a label assignment $x_L : V \rightarrow L$}}
    \BlankLine    
    \nonl\underline{Every time node $v$ is activated:} \\ 
    \tcc{states from $L$ are always chosen non-deterministically}    
    \nonl\resizebox{.925\textwidth}{!}{%    
    \begin{tabular}{llll}        
        % \hypertarget{line-gcssa:c1}{S1} : & \emph{v.state $\neq \bot$ $\land$ (a condition is violated $\lor$ $\exists u \in N(v):$ u.state = $\bot$)} & $\rightarrow$ & \emph{v.state} $\gets \bot$ \\
        \hypertarget{line-gcssa:c1}{S1} : & \emph{v.state $\neq \bot$ $\land$ $((B_1(v),x_C |_{N[v]}) \notin R$ $\lor$ $\exists u \in N(v):$ u.state = $\bot)$} & $\rightarrow$ & \emph{v.state} $\gets \bot$ \\

        \hypertarget{line-gcssa:c2}{S2} : & \emph{v.state = $\bot $} & $\rightarrow$ & \emph{v.state} $\gets$ \emph{possible state from} $L$    
    \end{tabular}%    
    }
\end{algorithm}

\begin{theorem}\label{th:gcssa-correctness}    
  Let $(L,R)$ be a  labeling scheme of finite type
  satisfying the Existence and Soundness conditions for some task
  $\mathcal{T}$ in a connected graph $G$.  Starting from any 
  configuration, any Gouda fair execution of Algorithm
  \ref{algo:gcssa} on $G$ stabilizes to a correct final configuration.
\end{theorem}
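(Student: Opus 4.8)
The argument splits naturally into two parts: (i) every Gouda-fair execution eventually reaches a \emph{final} configuration (no guard is enabled anywhere), and (ii) any final configuration is correct for $\mathcal{T}$. Part (ii) is the easy direction: if a configuration is final then no node satisfies guard S1, which in particular means no node has state $\bot$ and for every $v$ we have $(B_1(v), x_C|_{N[v]}) \in R$; hence the configuration is locally correct with respect to $(C,R)$, and by the Soundness property of the certification scheme, $(G, \mathit{out}\circ x_C) \in P$. So the configuration is correct for $\mathcal{T}$, and it is final, i.e.\ it is a correct final configuration.

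For part (i), the plan is to track the set $Z$ of nodes currently in state $\bot$, and argue in two phases. First, I would show that once the "error'' has been absorbed the system stays clean: if at some point $Z = \emptyset$ \emph{and} the configuration is locally correct, then no node's S1 guard is enabled and no node's S2 guard is enabled, so the configuration is already final and nothing changes. Conversely, I would show that if a configuration is \emph{not} final, then either some node is in $\bot$ or some node detects a violation and (by S1) will move to $\bot$ on activation — so "badness'' can only leave the system by first passing through a $\bot$-phase and then being resolved by S2-moves that happen to land in a globally consistent assignment. The core of the argument is that such a resolution is always \emph{reachable} from any configuration that is reached infinitely often: from a bad configuration, consider the (finite) execution that first lets the $\bot$-infection spread — by S1, a node adjacent to a $\bot$-node becomes $\bot$, and since $G$ is connected, in at most $\mathrm{diam}(G)$ rounds of activating appropriate nodes we reach the all-$\bot$ configuration; from all-$\bot$, repeatedly activate nodes via S2 and let each non-deterministic choice match a fixed global certificate assignment $x_C^\star$ guaranteed by the Existence property, processing nodes in an order so that once a node is set it is never forced back to $\bot$ (e.g.\ if all of its not-yet-finalized neighbors are $\bot$, setting $v$ to $x_C^\star(v)$ cannot enable S1 at $v$ as long as the remaining neighbors will also be set consistently — this needs a short lemma that an all-but-$\bot$-consistent neighborhood plus $x_C^\star$ values keeps S1 disabled, which follows because $x_C^\star$ is locally correct and $\bot$-neighbors only trigger S1, never $R$-violations once they're gone). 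The upshot: from \emph{any} configuration a correct final configuration is reachable by a nonempty sequence of enabled commands.

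Finally I would close the loop with Gouda fairness and finiteness of $G$ (hence of the configuration space). Since the configuration space is finite, some configuration $\gamma$ recurs infinitely often along the execution. By the reachability claim above, a correct final configuration $\gamma^\star$ is reachable from $\gamma$ by executing a nonempty subset of enabled commands at each step; by the definition of Gouda fairness, every configuration reachable from a configuration occurring infinitely often also occurs infinitely often, so $\gamma^\star$ occurs — and since $\gamma^\star$ is final, once it occurs the execution never leaves it, so it occurs infinitely often only by being reached and staying, i.e.\ the execution stabilizes at $\gamma^\star$. The main obstacle I anticipate is making the "process the $\bot$-nodes one at a time via S2 without re-triggering S1'' step fully rigorous: one must choose the activation order and the non-deterministic S2 outcomes carefully so that a freshly-set node never has an already-set neighbor with which it is $R$-inconsistent, and never itself falls inside a ball that is not in $R$; this is exactly where the Existence property (a \emph{globally} consistent $x_C^\star$ exists) does the work, but the bookkeeping of "a node is safe once all its neighbors are either already set to their $x_C^\star$-value or still $\bot$'' requires a small invariant argument, and I would isolate it as a lemma before invoking it.
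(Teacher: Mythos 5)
Your proposal is correct and follows essentially the same route as the paper: finality plus Soundness gives correctness, connectivity and S1 make the all-$\bot$ configuration reachable from any non-final one, Existence plus the non-determinism of S2 makes a correct final configuration reachable from all-$\bot$, and Gouda fairness together with finiteness forces stabilization at that silent configuration. The one obstacle you anticipate is avoidable: the paper simply activates \emph{all} nodes of the all-$\bot$ configuration in a single step, each choosing its value in a fixed correct assignment $x_C^\star$ guaranteed by Existence, and in any reachability argument you may simply decline to activate a freshly-set node whose S1 guard is enabled (enabled does not mean executed), so the bookkeeping lemma you plan to isolate is unnecessary.
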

\begin{proof}
  Let $G^*$ be a configuration where no node satisfies the guard of
  \hyperlink{line-gcssa:c1}{S1} or of \hyperlink{line-gcssa:c2}{S2},
  that is, $G^*$ is a final correct configuration. By the soundness
  property of $(L,R)$ for $\mathcal{T}$, $\mathcal{T}$ is solved in
  $G^*$. Consequently, in any final configuration, $\mathcal{T}$ is
  solved.

  We now show that any execution eventually reaches a final
  configuration. Since $(L,R)$ is of finite type, for each node $v$,
  the set of possible states is finite and thus in any execution,
  every node $v$ enters only in a finite number of different
  states. Since $G$ is finite, the number of different configurations
  existing in any execution (finite or infinite) is also finite.  By
  the existence property of $(L,R)$ for $\mathcal{T}$ in $G$, there
  exists a configuration $G^*=(G,x_L)$ of $G$ such that in $G$, every
  1-ball $(B_1(v),x_L |_{N[v]})$ belongs to $R$. Suppose there exists
  an infinite execution of the algorithm. This execution does not
  reach any final configuration, i.e., at each step, there is a node
  that can execute \hyperlink{line-gcssa:c1}{S1} or
  \hyperlink{line-gcssa:c2}{S2}. In the second case, it means that
  there is node that is labeled by $\bot$. In the first case, it means
  that one can reach in one step a configuration containing at least
  one node labeled $\bot$. Starting from a configuration containing
  one node labeled $\bot$, there exists a sequence of activation of
  nodes that enables reaching the configuration $G_\bot$ where all
  nodes are in state $\bot$. Then by Gouda fairness, in any
  infinite execution, the configuration $G_\bot$ is reached infinitely
  often. From $G_\bot$, there exists a sequence of node activations
  that enables reaching the configuration $G^*$. Again, by Gouda
  fairness, this implies that any infinite execution eventually
  reaches $G^*$. Since $G^*$ is a final configuration where no node
  can be activated, we obtain a contradiction with the existence of an
  infinite execution.
\end{proof}

A self--stabilizing algorithm can be obtained by combining the respective labeling scheme of each of the previous sections with Algorithm \ref{algo:gcssa}.

\section{Open Problems}

  A natural direction for further research is to consider classes of
  graphs that generalize dismantlable graphs.  For example,
  \cite{fieux2020hierarchy} gives a hierarchy on classes (based also
  on the work of \cite{mazurkiewicz2007locally}) that have dismantling
  orders, called \emph{s-dismantlable}, where 1-dismantlable graphs
  correspond to the dismantlable
  graphs. % defined in Section \ref{sec:k4-free}
  Informally, for $s > 1$, \cite{fieux2020hierarchy} defines a node to
  be $s+1$ dismantlable if the graph induced by its open neighborhood
  is $s$-dismantlable and a graph to be $s$-dismantlable if it is
  reducible to a single vertex by successive deletions of
  $s$-dismantlable vertices.
% For a summary of the classes considered,
%   see
%   \cref{fig:discussion}.%\todo{cite distance certification in caption}
%   \JC{I think we should move the figure to the introduction.}

% \begin{figure}[h]
%     \centering
%     \includegraphics[scale=.68]{discussion-updated.pdf}
%     \caption{A graph class hierarchy, where each arrow represents an
%       inclusion relation. The green class is considered
%       in~\cite{chalopin2024stabilising}. The blue classes are the
%       classes explicitly considered in~\cite{ChChKo-meshed}. The
%       yellow classes are the classes explicitly considered in this
%       paper. }
%     \label{fig:discussion}
% \end{figure}

Finally, a potential direction for future research is exploring whether it is possible to automatically transform a labeling into a self-stabilizing algorithm with a weaker scheduler, as defined in the taxonomy of \cite{dubois2011taxonomy}, even using non-constant
memory. For example, the transformation from \cite{BlinFP14}, which uses logarithmic memory but works even with an unfair scheduler, contrasts with our algorithm, which uses constant memory but requires a Gouda fair scheduler. These two results suggest that there may be a trade-off between the strength of the scheduler and the memory requirements that has not yet been fully explored.

\bibliographystyle{plainurl}
\bibliography{biblio}

\end{document}